\newtheorem{theorem}{Theorem}[section]
\newtheorem{algorithm}[theorem]{Algorithm}
\newtheorem{claim}[theorem]{Claim}
\newtheorem{lemma}[theorem]{Lemma}
\newtheorem{conjecture}[theorem]{Conjecture}
\newtheorem{corollary}[theorem]{Corollary}
\theoremstyle{definition}
\newtheorem{definition}[theorem]{Definition}
\newtheorem{remark}[theorem]{Remark}
\newcommand{\qedsymb}{\hfill{\rule{2mm}{2mm}}}
\renewenvironment{proof}[1][]{\begin{trivlist} % Note changed to renewenvironment by Ben because I loaded amsthm package.
\item[\hspace{\labelsep}{\bf\noindent Proof#1:\/}] }{\qedsymb\end{trivlist}}
\def\calH{{\cal H}}
\def\R{\mathbb{R}}
\def\sdp{{\rm{sdp}}}
\def\id{I}
\def\CC{\mathbb{C}}
\def\RR{\mathbb{R}}
\def\bprod{\otimes_b}
\newcommand\ip[1]{{\left\langle {#1} \right\rangle}}
\newcommand\ipb[1]{{\big\langle {#1} \big\rangle}}
\newcommand\ipB[1]{{\Big\langle {#1} \Big\rangle}}
\newcommand\ket[1]{{ |{#1} \rangle }}
\newcommand\bra[1]{{ \langle {#1} | }}
\newcommand{\kbbig}[1]{\Bigl| #1  \Bigr\rangle\Bigl\langle #1\Bigr|}
\newcommand{\NP}{\textsc{NP}}
\newcommand{\MAXCUT}{\textsc{MaxCut}}
\newcommand{\MAXKCUT}{\textsc{Max-k-Cut}}
\newcommand{\ELIN}{\textsc{E3LIN2}}
\newcommand{\SETCOVER}{\textsc{SetCover}}
\newcommand{\MAXCLIQUE}{\textsc{MaxClique}}
\newcommand{\eps}{\varepsilon}
\renewcommand{\epsilon}{\varepsilon}
\begin{document}

\title{\bf Unique Games with Entangled Provers are Easy}

\author{Julia Kempe\footnote{School of Computer Science, Tel-Aviv University, Tel-Aviv 69978, Israel. Supported by
the European Commission under the Integrated Project Qubit Applications (QAP) funded by the IST
directorate as Contract Number 015848, by an Alon Fellowship of the Israeli Higher Council of
Academic Research, by an Individual Research Grant of the Israeli Science Foundation, by a European
Research Council (ERC) Starting Grant and by a Raymond and Beverly Sackler Career Development
Chair.}
 \and
 Oded Regev\footnote{School of Computer Science, Tel-Aviv University, Tel-Aviv 69978, Israel. Supported
   by the Binational Science Foundation, by the Israel Science Foundation,
   by the European Commission under the Integrated Project QAP funded by the IST directorate as Contract Number 015848
   and by a European Research Council (ERC) Starting Grant.}
\and
Ben Toner\thanks{School of Physics, The University of Melbourne,
   Victoria 3010, Australia. Part of this work was completed at CWI
   (Amsterdam). Supported by the Dutch BSIK/BRICKS project, by the
  European Commission under the Integrated Project QAP funded by the
  IST directorate as Contract Number 015848, and by NWO VICI project 639-023-302.}
}

%\date{}

\maketitle

\begin{abstract}
We consider one-round games between a classical verifier and two provers who share entanglement. We show that
when the constraints enforced by the verifier are `unique' constraints (i.e., permutations), the value of the
game can be well approximated by a semidefinite program. Essentially the only algorithm known previously was
for the special case of binary answers, as follows from the work of Tsirelson in 1980. Among other things,
our result implies that the variant of the unique games conjecture where we allow the provers to share
entanglement is false. Our proof is based on a novel `quantum rounding technique', showing how to take a
solution to an SDP and transform it to a strategy for entangled provers. Using our approximation by a
semidefinite program we also show a parallel repetition theorem for unique entangled games.
\end{abstract}

\section{Introduction}

\paragraph{Games:}
For nearly two decades, two-prover one-round games have played a major role in
many of the most important developments in theoretical computer science.
Such games consist of a verifier and two provers who are unable to communicate with each other.
The game starts when the verifier sends two questions, one to each prover,
chosen according to some joint distribution. Each prover then replies
with an answer chosen from the alphabet $\{1,\ldots,k\}$ for some $k \ge 1$.
Finally, the verifier decides whether to accept or reject, based on the
answers he received. The {\em value} of such a game is defined as the
maximum success probability that the provers can achieve. For example,
let us consider the following very simple game known as the CHSH game~\cite{Clauser:69a}:
the verifier sends a random bit to each of the provers, who then reply with
one bit each (so $k=2$). The verifier accepts if and only if the XOR of the answers
is equal to the AND of his questions. A moment's reflection shows
that the value of this game is $\frac{3}{4}$, and is obtained, say,
when the provers always return $0$.

One of the most important breakthroughs in theoretical computer science was
the discovery of the PCP theorem in the early 90s~\cite{AS98,ALMSS98}.
Combined with Raz's parallel repetition theorem~\cite{Raz98},
it implies the following.

\begin{theorem}[\cite{AS98,ALMSS98,Raz98}]
\label{thm:pcp}
For any $\delta > 0$ there exists a $k = k(\delta)$ such that it is \NP-hard to
determine whether, given a (two-prover one-round) game with answers from a domain of size $k$,
its value is $1$ or at most $\delta$.
\end{theorem}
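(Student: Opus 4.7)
The plan is to combine the PCP theorem with Raz's parallel repetition theorem via the standard clause--variable reduction.

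First, the PCP theorem \cite{AS98,ALMSS98} gives an absolute constant $s_0 \in (0,1)$ such that it is \NP-hard to decide, given a 3CNF formula $\varphi$, between $\varphi$ being satisfiable and no assignment satisfying more than an $s_0$-fraction of its clauses. From such a $\varphi$ I would construct the clause--variable game $G_\varphi$: the verifier picks a uniformly random clause $C$ and sends it to Prover~1, then picks a uniformly random variable $x \in C$ and sends it to Prover~2; Prover~1 returns an assignment $a \in \{0,1\}^3$ to the three variables of $C$, Prover~2 returns a bit $b$, and the verifier accepts iff $a$ satisfies $C$ and is consistent with $b$ on $x$. Padded to a common alphabet of size $k_0 := 8$, this is a two-prover one-round game with value $1$ in the YES case. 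In the NO case, passing to optimal deterministic strategies $(f_1,f_2)$ and letting $\sigma := f_2$ be the global assignment read off Prover~2: for each of the $\ge (1-s_0)$-fraction of clauses $C$ on which $\sigma$ fails, either $f_1(C)$ also fails to satisfy $C$ (rejected), or $f_1(C) \neq \sigma|_C$ and the two answers disagree on at least one variable of $C$, so a uniformly random $x \in C$ catches the provers with probability at least $1/3$. A one-line averaging gives $\mathrm{val}(G_\varphi) \le s_1 := (2+s_0)/3 < 1$.

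Next, I would invoke Raz's parallel repetition theorem \cite{Raz98}: the $n$-fold parallel repetition $G_\varphi^{\otimes n}$ has value at most $s_1^{\Omega(n / \log k_0)}$ and alphabet size $k_0^n$. Given $\delta > 0$, pick $n = n(\delta)$ to be the smallest integer making this bound at most $\delta$, and set $k(\delta) := k_0^n$; since $s_1$ and $k_0$ are absolute constants, $n$ and hence $k(\delta)$ depend only on $\delta$. The composite reduction $\varphi \mapsto G_\varphi^{\otimes n}$ runs in time $\poly(|\varphi|)$ for fixed $\delta$ and produces a two-prover one-round game of alphabet size $k(\delta)$, value $1$ in the YES case, and value at most $\delta$ in the NO case, which establishes the claimed \NP-hardness.

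The only serious difficulty is Raz's theorem itself, whose proof rests on a delicate information-theoretic argument showing that parallel strategies cannot correlate different coordinates arbitrarily well; naive direct-product style arguments do not suffice to push the soundness below every constant. Granting that tool, the clause--variable construction and its soundness analysis are textbook, and the choice of $n$ is a short calculation, so the whole theorem is essentially a matter of assembling known pieces.
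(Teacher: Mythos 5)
The paper does not present a proof of this theorem; it is stated as a known consequence of the cited references, and your proposal is exactly the standard derivation those references support: gap-3SAT from the PCP theorem, the clause--variable two-prover game (alphabet $8$) with soundness at most $(2+s_0)/3$, and Raz's parallel repetition to push the soundness below $\delta$ while keeping the alphabet size a function of $\delta$ alone. The argument is correct; the only minor imprecision is writing the repeated value as $s_1^{\Omega(n/\log k_0)}$, whereas Raz's theorem gives a bound of the form $\gamma^{\Omega(n/\log k_0)}$ for some $\gamma < 1$ depending polynomially on $1-s_1$ rather than equal to $s_1$, but this does not affect the conclusion since all that is needed is decay to $0$ with $n$ at a rate depending only on absolute constants.
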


This result has led to many important advances in the field, including in particular
many tight NP-hardness results. For instance, H{\aa}stad~\cite{Hastad01}
showed that it is NP-hard to tell whether a given 3SAT formula is satisfiable,
or not more than a $\frac{7}{8}+\eps$ fraction of its constraints can be satisfied.
This shows that the algorithm that simply assigns random values
to the variables is essentially optimal. Other tight NP-hardness results
that follow from the PCP theorem include a hardness factor of $\frac{1}{2}+\eps$
for $\ELIN$~\cite{Hastad01}, a hardness factor of $n^{1-\eps}$ for $\MAXCLIQUE$~\cite{Hastad-clique},
and a hardness factor of $\ln n$ for $\SETCOVER$~\cite{Feige98}.

One important special case of games is that of {\em unique games}. Here,
the verifier's decision is restricted to be of a very specific form.
Namely, for any questions $s,t$ sent to the provers, the verifier
accepts answers $a,b$ if and only if $b = \sigma_{st}(a)$ where
$\sigma_{st}$ is some permutation on $\{1,\ldots,k\}$.
In 2002, Khot~\cite{Khot02} presented a conjecture known as the
unique games conjecture (UGC) that essentially says that
it is hard to approximate the value of a unique game, even
if we are only interested in distinguishing the almost satisfiable
case from the almost completely unsatisfiable case.
\begin{conjecture}[Unique games conjecture~\cite{Khot02}]
\label{conjecture:ugc}
For any $\eps,\delta > 0$ there exists a $k = k(\eps,\delta)$ such that it is \NP-hard to
determine whether, given a unique game with answers from a domain of size $k$,
its value is at least $1-\eps$ or at most $\delta$.
\end{conjecture}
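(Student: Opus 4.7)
The plan is to attempt a reduction from the general-constraint game hardness of Theorem~\ref{thm:pcp} to the unique-constraint setting, tracking how the completeness/soundness gap changes. As a first step, I would reduce Theorem~\ref{thm:pcp} to a projection (Label Cover) game, where the verifier's constraints are functions $\pi_{st}:[k]\to[k]$ rather than arbitrary predicates; this follows by a standard label-extension and parallel repetition. At this point we have \NP-hardness of distinguishing value $1$ from value $\delta$ for projection games over alphabet $[k]$, and the residual task is to convert such an instance into a unique (permutation) game while paying only a constant factor in the gap, and inflating completeness from $1$ to $1-\eps$ by at most $\eps$.

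For the actual transformation, the two approaches I would try are: (i) a combinatorial ``fiber pairing,'' extending each projection $\pi_{st}$ to a bijection by matching its fibers elementwise (after duplicating labels as needed) and then randomizing to destroy trivial strategies; and (ii) a long-code composition in which each prover sends a (possibly noisy) long-code encoding of their Label Cover label and the verifier performs a local test that is intrinsically a permutation, e.g.\ a cyclic shift on coordinates or an XOR-type test on $\{0,1\}^k$. Completeness in (ii) is controlled by choosing a noise parameter so that honest encodings of consistent labels accept with probability close to $1-\eps$.

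The hard part, and in my view the main obstacle, is the soundness analysis. Unique constraints always admit the trivial random-labeling strategy of success $1/k$, and ruling out strategies that do only slightly better seems to require precisely the Fourier-analytic machinery (noise stability, hypercontractivity, the invariance principle) that has historically been used to derive tight hardness results \emph{from} the UGC rather than to prove it. Moreover, the naive fiber-pairing reduction in (i) inflates soundness almost to $1$, while the long-code composition in (ii) only yields good soundness if one can reason about correlated Fourier spectra of two provers' long codes under permutation-only tests, which is not known to follow from any existing composition theorem. I therefore expect that any implementation of the plan above gets stuck at precisely this soundness step, and that Conjecture~\ref{conjecture:ugc} requires genuinely new ideas beyond what Theorem~\ref{thm:pcp} and gap-amplification provide; the best I would be able to extract along these lines is a weaker statement, such as hardness for ``approximately unique'' games, not the conjecture as stated.
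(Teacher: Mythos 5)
You have correctly identified that Conjecture~\ref{conjecture:ugc} is not proved in this paper, nor anywhere else: it is an open conjecture due to Khot~\cite{Khot02}, stated here only as background. There is no ``paper's own proof'' to compare against, so the right answer is exactly what you gave --- a reasoned account of why the natural reduction strategies fail, concluding that you cannot establish the statement.

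Your diagnosis of the obstruction is also essentially the accepted one. The label-extension and parallel-repetition route from Theorem~\ref{thm:pcp} does give projection-game hardness with completeness $1$ and soundness $\delta$, but the step from projection constraints to genuine permutation constraints is precisely where all known techniques break. Your observation (i), that fiber-pairing a $d$-to-$1$ map into a bijection destroys soundness, is correct; and your observation (ii), that long-code composition with permutation-only tests would require the very Fourier-analytic machinery (noise stability, invariance principle) that has so far only been used to derive consequences \emph{from} the UGC, is the standard reason the conjecture is considered hard. One small point worth adding: the necessity of weakening completeness to $1-\eps$ (rather than $1$) is not merely a technical inflation step; it is essential, because unique games with value exactly $1$ are solvable in polynomial time by simple propagation, as the paper notes after stating the conjecture. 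Finally, it is worth flagging that the paper's own main contribution (Theorem~\ref{thm:mainthmintro}) shows the \emph{entangled-prover} analogue of this conjecture is false, which is further (if indirect) evidence that the classical conjecture, if true, must rely on phenomena that do not survive in the presence of shared entanglement.
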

It is not hard to see that determining whether the value of a unique game is $1$
(i.e., the game is perfectly satisfiable) can be done efficiently using a simple
algorithm, and therefore it is crucial that we insist here on $\eps>0$
(cf.\ Theorem~\ref{thm:pcp}). Let us also mention that there {\em exist}
$\eps,\delta>0$ for which the problem in the conjecture {\em is} known to be
NP-hard (even with $k=2$). This follows from H{\aa}stad's hardness result for $\MAXCUT$~\cite{Hastad01}.
Despite a considerable amount of work in the last few years,
the plausibility of the conjecture is still uncertain, and this issue
is currently one of the central topics in theoretical computer science.

The tremendous importance of the unique games conjecture stems from the fact
that for many fundamental problems, it implies strong, and often tight,
inapproximability results that are not known to hold under more conventional assumptions.
As an example, let us consider the $\MAXCUT$ problem. The best known algorithm for this problem was given in 1994 by Goemans and Williamson,
and achieves an approximation factor of $\approx 0.878$~\cite{GoemansW95}.
It consists of two main steps: first, one writes a semidefinite programming (SDP)
relaxation of the given $\MAXCUT$ instance, where by `relaxation' we mean that by construction,
the value of the SDP is guaranteed to be not smaller than the size of the maximum cut.
This SDP can then be solved efficiently using known techniques for convex optimization,
such as the ellipsoid algorithm (see, e.g., \cite{BoydV04}). The second part of their algorithm is a `rounding procedure'
in which the solution to the semidefinite program is converted into a solution
to the $\MAXCUT$ problem. The name `rounding' comes from the fact
that this step can be seen as a way to round the `continuous' SDP solution
into a `discrete' solution to $\MAXCUT$.

Despite intensive research, no better algorithm for $\MAXCUT$ has been found until this day. The best known
NP-hardness result, due to H{\aa}stad, shows that obtaining approximation ratio above $\approx 0.941$ is
NP-hard \cite{Hastad01}. The hardness for approximation factors between $\approx 0.878$ and $\approx 0.941$
was unclear for many years. Recently, it was shown by Khot et al.~\cite{KhotKMO07} that the UGC implies a
tight inapproximability result of $\approx 0.878$, thereby giving a partial answer to this long-standing open
question.

Another problem for which the UGC implies a tight hardness result is
the Vertex Cover problem, where a simple algorithm
gives an approximation factor of $2$ and the UGC implies a hardness
factor of $2-\eps$ for any $\eps>0$~\cite{KhotR03} (whereas the best known
NP-hardness result is $1.36$ \cite{DinurS05}).
The UGC also implies strong inapproximability results for graph coloring
problems~\cite{DinurMR06} and the Sparsest Cut
problem~\cite{KhotV05,ChawlaKKRS06}.

%\nocite{MoOdOl:u,Hastad-clique,feigeset,KLS,GuruKhanna,BluKar97,AroraRaoVaz:04,KhotVishnoi,Feige-Reichman,AS,ALMSS,Raz}

In another line of work, attempts have been made to disprove the conjecture by means of efficient
approximation algorithms for the value of unique games~\cite{Trev05,CMM,GT06,ChlamtacMM06,unique:expander,MakarychevM09}.
So far, however, none of these results disproves this conjecture, and this by itself might be seen
by some as evidence in favor of the conjecture. Among the best algorithms is the one by Charikar et
al.~\cite{CMM} that, given a unique game on alphabet size $k$ whose value is $1-\eps$, outputs a solution of
value $1-O(\sqrt{\eps \log k})$. This does not disprove the conjecture, but instead gives us a lower bound on
$k$ for the conjecture to make sense (see also~\cite{KhotKMO07}). Another recent result is by
Chlamtac et al.~\cite{ChlamtacMM06} who show how to compute, given a unique game with alphabet size $k$ and
$n$ possible questions whose value is $1-\eps$, a solution of value $1-O(\eps\sqrt{\log n \log k})$. This is better than~\cite{CMM} for
small values of $\eps$, but as before, is not enough to disprove the conjecture and only tells us that $k$
and $n$ should be large enough for the conjecture to make sense. Finally, in a recent result,
Makarychev and Makarychev~\cite{MakarychevM09},
improving on earlier work by Arora et al.~\cite{unique:expander}, present an algorithm that given a unique game of value $1-\eps$ for $\eps < c \lambda$, finds a solution of
value at least $1-C \eps/h$ where $\lambda$ is the spectral gap of the graph underlying the unique games instance, $h$
is its edge expansion, and $c,C$ are absolute constants.
This result shows that in order to prove the unique games conjecture, we must
consider graphs with low expansion. We remark that most of these results are based on an SDP relaxation,
followed by a (usually quite sophisticated) rounding procedure.

\paragraph{Games with entangled provers:}
In this paper we consider the model of two-prover one-round games in which the provers are allowed to
share entanglement. (The verifier and all communication remain classical, as before.)
Such games are sometimes known in the quantum information literature as \textit{nonlocal games}
and have their origins in a seminal 1935 paper by Einstein, Podolsky, and Rosen~\cite{EinsteinPR35} and a
1964 paper by Bell~\cite{Bell:64a}. We define the {\em entangled value} of a game as
the maximum success probability achievable by provers that share entanglement.
For instance, it is known that the entangled value of the
CHSH game is $\frac{1}{2}+ \frac{1}{2\sqrt{2}}\approx 85\%$, which is strictly greater than the $75\%$ achievable
without entanglement.
This remarkable ability of entanglement to create correlations that are impossible
to obtain classically (something Einstein referred to as ``spooky")
is one of the most peculiar aspects of quantum mechanics and required many
years to be properly understood.

One motivation for this model comes from the fact that although a verifier can
guarantee that provers don't communicate (by separating them in space, say),
he has no way to guarantee that they don't share entanglement. Therefore, in order
for a proof system with two provers to be sound in our quantum physical world,
we must consider the scenario where the provers share entanglement,
even when the verifier is classical. This is especially true for
multi-prover cryptographic protocols, where the presence of entanglement
could break the security of the protocol.
Another
purely mathematical motivation for this model comes from the hope that
through studying this model, we can reach a better understanding of the
non-classical correlations that arise in quantum mechanics, and even
obtain some new insights into multi-prover games similar to those
obtained from the PCP theorem.

Despite considerable work on this model, our understanding of it is still quite limited. One of the
earliest and most important results in this area is due to Tsirelson~\cite{Tsirelson:80a}, who
showed that, for the special case of unique games with an alphabet of size $k=2$, the entangled
value is given exactly by the optimum of a certain SDP and can therefore be computed efficiently
(see also~\cite{CleveHTW04} where this is made explicit and~\cite{CleveSUU07} for a nice
application of this SDP). Unique games with $k=2$ are also known as XOR-games because one can think
of the two possible answers as a bit, and then the only possible unique constraints are $a \oplus b
= 0$ and $a \oplus b = 1$. This result is in contrast to the (non-entangled) value of an XOR-game,
which is NP-hard to compute exactly or even to approximate (as follows from H{\aa}stad's hardness
result for $\MAXCUT$). Finally, we note that the CHSH game is an XOR-game, and one way to derive
its entangled value of $\frac{1}{2}+\frac{1}{2\sqrt{2}}$ is by computing Tsirelson's SDP.

Unfortunately, our current understanding of the entangled value of games does not extend much
beyond the case of XOR-games. To the best of our knowledge, the only other general result is by
Masanes~\cite{Masanes05}, who shows how to compute the entangled value of games with only two
possible questions to each prover and $k=2$.\footnote{His method also handles the case of more than
two provers.} Although restricted to this very special case, his result still allows us to handle a
few cases not handled by Tsirelson's result (namely, non-unique games for $k=2$ with two
questions).

In all other cases, no method is known to compute or even approximate (with provable guarantees) the
entangled value of a game. Even for some very small fixed size games, there is still uncertainty regarding
their entangled value (see, e.g.,~\cite{Buhrman04}). One recent attempt to handle more general games was made
by Navascues et al.~\cite{NavascuesPA07}, who outlined a hierarchy of SDP relaxations of the entangled value
of a game. Unfortunately, there are no known bounds on the quality of their SDP relaxations.

Another line of work studies the hardness properties of entangled games. Building on work of Kempe
et al.~\cite{kempe:immunization}, Ito, Kobayashi, and Matsumoto~\cite{ItoKM09} have recently shown
that approximating the entangled value of a general game is NP-hard, albeit only to within
exponentially small precision. Strengthening this result to show hardness of approximating the
entangled value to within a constant is still one of the most important open questions in the area.

\paragraph{Our results:}
Our main result is an approximation algorithm for the entangled value of any
unique game. More precisely, our main theorem is the following.

\begin{theorem}\label{thm:mainthmintro}
There exists an efficient algorithm that, given a unique game whose entangled
value is $1-\eps$, outputs a value $\eps/6 \le \eps' \le \eps$ and a description of an entangled strategy for the provers
whose success probability is at least $1-6\eps'$.
\end{theorem}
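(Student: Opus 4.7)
The plan is to follow the relaxation-and-rounding paradigm familiar from the classical Goemans--Williamson algorithm for \MAXCUT{} and from the classical approximation algorithms for unique games, but with a genuinely quantum rounding step that takes full advantage of the operator structure of entangled strategies. The first task is to write down an SDP whose optimum upper bounds the entangled value. For each question $s$ to the first prover and answer $a\in\{1,\dots,k\}$, introduce a vector $u_{s,a}$; analogously introduce $v_{t,b}$ for the second prover. The feasibility conditions are intended to mimic the structure of $u_{s,a}=(A_s^a\otimes \id)\ket{\psi}$ and $v_{t,b}=(\id\otimes B_t^b)\ket{\psi}$ arising from a projective strategy: orthogonality $\la u_{s,a},u_{s,a'}\ra=0$ for $a\neq a'$, normalization $\sum_a\norm{u_{s,a}}^2=1$, the mirror conditions for $v$, marginal consistency $\sum_a u_{s,a}=\sum_b v_{t,b}$, and non-negativity $\la u_{s,a},v_{t,b}\ra\ge 0$. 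The SDP maximizes $\sum_{s,t}p(s,t)\sum_a\la u_{s,a},v_{t,\sigma_{st}(a)}\ra$. Since any honest projective quantum strategy of value $1-\eps$ yields a feasible solution of equal value, the SDP optimum is at least $1-\eps$, and after solving the SDP in polynomial time and setting $\eps'$ to be $1$ minus the SDP optimum, we have $\eps'\le\eps$ for free.

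The heart of the proof is a quantum rounding step: given SDP vectors of objective value $1-\eps'$, produce in polynomial time an entangled strategy of value at least $1-6\eps'$. The construction I would try is to embed the SDP vectors in the Hilbert space $\CC^d\otimes\CC^d$ via the maximally entangled state $\ket{\psi}=\frac{1}{\sqrt{d}}\sum_{i=1}^d\ket{i}\ket{i}$, where $d$ is the SDP dimension. Using the canonical vector--operator correspondence and the identity $(M\otimes\id)\ket{\psi}=(\id\otimes M^T)\ket{\psi}$, one can realize each SDP vector $u_{s,a}$ as $(\tilde A_s^a\otimes\id)\ket{\psi}$ for some operator $\tilde A_s^a$, and similarly on Bob's side; under this identification the SDP inner products become expectations of the form $\bra{\psi}(\tilde A_s^a\otimes \tilde B_t^b)\ket{\psi}$. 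The catch is that the operators $\tilde A_s^a$ thus obtained are not in general POVM elements. Since the SDP constraints are precisely the defining relations of a POVM up to a multiplicative normalization, I expect that a combination of (i) a small Hermitization/symmetrization, (ii) Gram--Schmidt-style perturbation to restore orthogonality and positivity of the squared operators, and (iii) addition of a catch-all POVM element to absorb any remaining mass will cost an additive $O(\eps')$ at each step. Each of the natural error sources---perturbing $\tilde A_s^a$ to a genuine POVM, likewise for $\tilde B_t^b$, and bounding the gap between the SDP objective and the actual accept probability of the rounded strategy---should contribute a constant times $\eps'$, summing to the claimed $6\eps'$ loss.

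Combining the two bounds closes the argument: the rounded strategy witnesses that the true entangled value is at least $1-6\eps'$, but it equals $1-\eps$ by hypothesis, so $\eps\le 6\eps'$, i.e., $\eps'\ge\eps/6$. The main obstacle, by a wide margin, is constructing and analyzing the rounding. Classical rounding schemes for unique games lose $\Theta(\sqrt{\eps\log k})$ or worse, so the $k$-independent, linear-in-$\eps$ bound advertised here cannot come from a classical Gaussian-style rounding applied to a quantum SDP; it must exploit the ability of the maximally entangled state to realize an arbitrary collection of vectors as operators on a single Hilbert space, a feature with no classical analogue. Making this precise while keeping the operators approximately projective and positive---and doing so simultaneously on both sides so that the SDP's non-negativity and marginal-consistency slacks translate into small errors in the accept probability---is where I expect the real work and the true novelty of the paper to reside.
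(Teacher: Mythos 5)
Your setup is right: the SDP relaxation you describe is essentially the paper's SDP~\ref{sdp:1}, and the high-level plan of solving the SDP and quantum-rounding via the maximally entangled state is the correct framework. But the rounding step you sketch has a genuine gap, and it is where the paper departs from your proposal in an essential way.

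You propose to use the vector--operator correspondence to write each SDP vector $u_a^s$ as $(\tilde A_s^a \otimes \id)\ket\psi$ and then perturb the $\tilde A_s^a$ into a POVM, arguing that ``the SDP constraints are precisely the defining relations of a POVM up to a multiplicative normalization.'' That claim is false. The SDP orthogonality constraint $\ip{u_a^s,u_b^s}=0$ translates, under the correspondence, only to Hilbert--Schmidt orthogonality $\mathrm{Tr}(\tilde A_s^{a\dagger}\tilde A_s^b)=0$, and the normalization constraint only to $\sum_a \mathrm{Tr}(\tilde A_s^{a\dagger}\tilde A_s^a)=d$. These trace conditions are far weaker than the operator-level conditions $\tilde A_s^{a\dagger}\tilde A_s^b=0$ and $\sum_a \tilde A_s^{a\dagger}\tilde A_s^a=\id$ that define a projective measurement, and an arbitrary feasible SDP solution will produce operators that are nowhere near Hermitian, positive, or orthogonal. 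There is no reason the ``Hermitization plus Gram--Schmidt plus catch-all'' perturbation you invoke should cost only $O(\eps')$; in general the distance from trace-orthogonal matrices to a POVM is unbounded in terms of $\eps'$ alone.

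The paper avoids this obstacle entirely by never passing through operators at all. Because the $u_a^s$ are mutually orthogonal for fixed $s$, the rank-one projectors $P_a^s = \kb{u_a^s/\|u_a^s\|}$ together with $P_0 = \id - \sum_a P_a^s$ already form a valid projective measurement with no approximation needed. The real difficulty is different from the one you identified: on a single copy of the $n$-dimensional maximally entangled state, each $P_a^s$ fires with probability only $1/n$, so the ``catch-all'' outcome $P_0$ swallows almost all the probability mass and the prover almost never produces an answer. The paper's fix (Measurement~\ref{meas:1}) is to give Alice and Bob many copies of $\ket\psi$ plus a shared sequence of random reals $\lambda_r$, have each prover repeat the measurement on fresh copies until a non-zero outcome survives a rejection-sampling test $\lambda_r \le \|u_a^s\|^2$, and then output. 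The shared $\lambda_r$ coordinates the rejection sampling so that, with probability controlled by the SDP value, both provers accept on the same copy, and the rejection sampling itself corrects for the non-uniform marginals $\|u_a^s\|^2$. The analysis (Lemma~\ref{lemma:1}, Corollary~\ref{cor:goodanswers}, Theorem~\ref{thm:uniqgames}) then yields the $1-6\eps$ bound. This repeat-until-success-with-coordinated-rejection-sampling mechanism is the heart of the argument and is absent from your proposal; without it, the rank-one measurement (or any small perturbation scheme) does not produce a prover that answers at all, let alone one with the claimed success probability.
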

This theorem gives, for the first time, a way to approximate the entangled value of games with more than two
possible answers. It is also the first provable approximation (as opposed to exact) algorithm for the
entangled value of a game.

Our result shows that the analogue of Conjecture~\ref{conjecture:ugc} for entangled provers
is {\em false}. Indeed, as long as, $6\eps+\delta<1$, our algorithm can efficiently
tell whether the entangled value of a game is at least $1-\eps$ or at most $\delta$.
This can be seen as a (modest) contribution to the understanding of the ever-more-mysterious
unique games conjecture.

It is interesting to compare our algorithm with the approximation algorithms
for the (non-entangled) value of unique games. Given a unique game with entangled value
$1-\eps$, our algorithm outputs an entangled strategy whose value is at least $1-6\eps$.
In contrast, given a unique
game with value $1-\eps$, the algorithms in~\cite{Trev05,CMM,GT06,ChlamtacMM06,unique:expander,MakarychevM09}
output a strategy whose value depends not only on $\eps$ but also on other parameters,
such as the alphabet size $k$ or the expansion of the underlying graph. The fact that our approximation depends only on $\eps$
is crucial.

\paragraph{Techniques:}
The proof of Theorem~\ref{thm:mainthmintro} is based on a semidefinite programming (SDP) relaxation
of the entangled value.
Our SDP turns out to be equivalent to the one used by Khot in~\cite{Khot02} as a relaxation
of the (non-entangled) value of a game (and, in fact, this SDP originates in the work of Feige and Lov\'asz~\cite{FL}).
We note, however, that in the non-entangled case, certain extra constraints are sometimes used that are
not known to hold in the entangled case.

The heart of the proof is in the second step, where we show
how to take a solution to the SDP and transform it into a strategy for entangled
provers. We call this step the `quantum rounding' step in analogy with the
rounding procedure used in the non-entangled case.
We hope that this novel technique will be useful for other problems
as well. The main idea in our rounding step is to use the vectors given
by the SDP solution as a quantum measurement performed by the
provers on a maximally entangled state shared by them.

\paragraph{Extensions:}
We present two extensions of our main theorem. The first involves a special case of unique games which we
call {\em uniform} unique games. These are unique games for which there exists an optimal strategy in which
each prover's answer distribution is uniform (for each question). As we show later, any unique game in which
the verifier's decision is based solely on $a-b \pmod k$ is a uniform unique game. This includes XOR-games as
well as the unique games constructed in~\cite{KhotKMO07}. For this special case, we show that the factor $6$
in our main theorem can be improved to $4$. We also extend our main theorem to $d$-to-$d$ games, which are
another type of game considered in~\cite{Khot02}. Namely, we show that Khot's conjecture for $d$-to-$d$ games
is false in the case that the provers share entanglement.

\paragraph{Parallel repetition:}
Our semidefinite programming relaxation also allows us to show a parallel repetition theorem for unique
entangled games. Parallel repetition for non-entangled classical games has been investigated extensively,
with early work culminating in Raz's parallel repetition theorem \cite{Raz98}.
In the case of entangled games, no parallel repetition theorem is known, and proving one
seems even more challenging than in the case of non-entangled games. The only special case where
parallel repetition is known to hold is for entangled XOR-games~\cite{CleveSUU07}. We
show that this result can be extended to unique games, albeit with somewhat weaker quantitative behavior (see
Section \ref{sec:rep} for a precise statement).

Our approach to prove parallel repetition is similar to the one taken by Cleve et al. (which in fact dates
back to earlier work by Feige and Lov\'asz \cite{FL}): we show that a certain bipartite SDP relaxation of the
entangled game is multiplicative. The latter fact essentially follows from a recent result of Mittal and
Szegedy \cite{MittalSzegedy:SDPtensor}. See Section \ref{sec:rep} for details.

\paragraph{Discussion:}
Our work gives for the first time a way to approximate the entangled value of games with more than
two possible answers. One open question this raises is whether there exist better algorithms for
approximating (or even computing exactly) the entangled value of a unique game. So far we only know
of such a result in the case $k=2$, where Tsirelson's SDP gives an exact
answer~\cite{Tsirelson:80a}. Extending this to $k>2$ might require improving our quantum rounding
procedure, and might also involve the use of a tighter SDP, perhaps taken from  the SDP hierarchy
outlined in~\cite{NavascuesPA07}. Another open question is whether our quantum rounding technique
can be used for other types of games. One good candidate are games with inequality constraints,
such as $\MAXKCUT$, as those are relatively well-understood~\cite{KhotKMO07}. One might also hope
to extend our results to the case of general entangled games.

\paragraph{Strong violation of Bell inequalities:}
%Finally, let us mention a corollary of our main result.
Games exhibiting a gap between their entangled value and non-entangled value
are of great interest to physicists, for possible use in experiments whose
goal is to demonstrate the presence of quantum entanglement
(see, e.g.,~\cite{werner01:_bell} and references therein).
Such games are said in the physics literature to exhibit a `violation of Bell inequalities'.
% One reason is because, to date, every experimental demonstration of Bell
% inequality violation can also be explained by a classical theory that exploits
% some loophole\footnote{For example, if the dectectors used in an experiment
% do not work reliably, then one might be able to reproduce---without entanglement---what
% appear to be quantum correlations, by allowing the probability that the detector
% fires to depend on which measurement is made.} in the design of the experiment.
By combining our
main result with a remarkable construction by Khot and Vishnoi~\cite{KhotV05},
we can obtain unique games whose entangled value is very close to $1$ even
though their value is very close to $0$. Previously, such large gaps
were known only for non-unique games (such as the parallel repetition of the Magic Square game).
The simpler structure of unique games might be an advantage in certain circumstances.
% (although probably not for obtaining the first conclusive demonstation of Bell inequality violation).
A related result was recently established for three-prover games with binary
answers~\cite{perez-garcia:_unboun}.

In more detail, Khot and Vishnoi constructed for any $k \ge 1$ and $\eta > 0$,
a unique game with $2^k/k$ questions to each prover
and answer alphabet of size $k$ for which the value of our SDP relaxation is at least $1-9 \eta$ and whose
(non-entangled) value is at most $2/k^{\eta}$.\footnote{Strictly speaking, their construction
gives a general constraint graph, and not a two-prover game as needed in our case.
In order to derive a two-prover game from their construction, simply choose a random constraint
and then randomly send one question to each prover.}
(We note that the existence of unique games whose SDP value is close to $1$ and whose
value is close to $0$ follows from the UGC, but Khot and Vishnoi's result
is unconditional and also gives explicit parameters.)
By combining their result with Theorem~\ref{thm:mainthmintro}, we obtain
that for any $k \ge 1$ and $\eta > 0$,
there exists a unique game $G$ with $2^k/k$ questions to each prover
and answer alphabet of size $k$ for which the entangled value is at least $1-54 \eta$ and whose
(non-entangled) value is at most $2/k^{\eta}$.

\section{Preliminaries}

We study \textit{one-round two-prover cooperative games of
incomplete information}, also known in the quantum information literature as
\textit{nonlocal games}.  In such a game, a referee (also called the verifier) asks questions to two provers, Alice and Bob,
who cooperate with each other. A  game $G = G(\pi, V)$ is specified by a set $Q$ and a number $k \ge 1$, a probability
distribution $\pi:Q\times Q\to [0,1]$, and a predicate $V:[k] \times [k] \times Q \times Q \to \{0,1\}$.  The
game proceeds as follows: the referee samples $(s, t) \in Q \times Q$ according to $\pi$ and sends question
$s$ to Alice and question $t$ to Bob. Alice replies with an answer $a
\in [k]$, and Bob with an answer $b \in [k]$.  The provers win if and only if $V(a,b\,|\,s,t)=1$.\footnote{We
write $V(a,b\,|\,s,t)$ for $V(a,b,s,t)$ to distinguish variables for questions
  from variables for answers.}
The provers are allowed to agree on a strategy before the game starts, but are not allowed to communicate
with each other after receiving their questions. The {\em value} of a game is the maximum probability with
which the provers can win. The provers may share randomness, but it is easy to see that this does not
increase the value of the game.

The provers can also share an entangled state, which can sometimes increase their winning probability
(for background on quantum information see, e.g., \cite{NC}). We
therefore define the {\em entangled value} of a game to be the highest winning probability of entangled
provers. Let us define this more explicitly.
In general, a strategy for entangled provers is described by a shared (possibly mixed) quantum state,
as well as a general measurement on Alice's part of the state for each of her questions,
and a general measurement on Bob's part of the state for each of his questions. On obtaining
question $s$, Alice performs the measurement corresponding to $s$ on her part of the state
and returns as answer the result; Bob's behavior is similar.
By standard arguments, we can assume without loss of generality that
Alice and Bob share a {\em pure} quantum state $\ket{\psi} \in \CC^{d \times d}$
for some $d \ge 1$, and that, moreover, they use {\em projective measurements}, i.e.,
for each $s$ Alice's measurement is described by $\{A_a^s\}_a$ where
the $A_a^s$ are orthogonal projectors and $\sum_a A_a^s=\id$, and similarly Bob uses
measurements $\{B_b^t\}_b$. By definition, the probability that on questions $s,t$
Alice answers $a$ and Bob answers $b$ is given by $\bra \psi
A_a^s \otimes B_b^t \ket \psi$. Therefore, the entangled value of $G$ can be written as
\begin{align*}
  \omega^* (G) = \lim_{d \to \infty} \max_{\ket \psi\in \CC^d \otimes \CC^d} \max_{A_a^s, B_b^t}\sum_{abst}
  \pi(s,t) V(a,b\,|\,s,t) \bra \psi A_a^s \otimes B_b^t \ket \psi.
\end{align*}

We shall be concerned with games of a specific form.
\begin{definition}
  \label{definition:1}
A game is termed \emph{unique} if we can associate a permutation $\sigma_{st}$ on $[k]$ with each pair
of questions $(s,t)$ such that $V(a,b\,|\,s,t) = 1$ if and only if $b = \sigma_{st}(a)$.
\end{definition}

Next we define \emph{linear games}, an important special case of unique games
that has been extensively studied in the literature (see, e.g., \cite{Hastad01,KhotKMO07}).
Linear games are a natural generalization of XOR-games to larger
alphabet size, and are defined as follows.
\begin{definition}
  \label{def:linear}
A game is termed \emph{linear} if there
is a way to identify $[k]$ with some Abelian group $H$ of size $k$ and
 a function $W:Q \times Q \to H$ such that $V(a,b\,|\,s,t)=1$ if and only if  $a - b = W(s,t)$ in $H$.
\end{definition}
It is easy to see that any linear game is in particular a unique game.
We also consider games in which the answers of each prover in an optimal
strategy are distributed uniformly in $[k]$.
\begin{definition}
  \label{definition:3}
A game is termed \emph{uniform} if there exists an optimal strategy for entangled provers
in which, for each prover and for each question, the marginal distribution of his answers
is uniform over $[k]$.
\end{definition}
One easy observation is that any linear game is also uniform. To see this, notice
that Alice and Bob, by using their shared randomness (or entanglement), can choose an element $c$ uniformly
from $H$ and add it to their responses. This does not change their probability of winning the game, but
causes each party's output to be uniformly distributed in $H$.

Finally, we consider more general games known as {\em $d$-to-$d'$ games}.
\begin{definition}\label{definition:dd}
A game has the \emph{$d$-to-$d'$ property} if for each pair of questions $(s,t)$
and each answer $a$ of the first prover, there are at most $d$ answers $b$
of the second prover for which $V(a,b\,|\,s,t) = 1$, and similarly, for each
answer $b$ of the second prover, there are at most $d'$ answers $a$ for which $V(a,b\,|\,s,t) = 1$.
\end{definition}
In~\cite{Khot02}, Khot conjectured that for any $\delta > 0$ there exists a $k = k(\delta)$ such
that it is \NP-hard to determine whether, given a $2$-to-$1$ game with answers from a domain of size $k$,
its value is exactly $1$ (i.e., perfectly satisfiable) or at most $\delta$. This conjecture was
used in proving the hardness of graph coloring problems~\cite{DinurMR06}. In Theorem~\ref{thm:ddgames}
below we show that if the provers are allowed entanglement, then this conjecture, as well as its
extension to $d$-to-$d'$ games, is false.

\section{SDP Relaxation}\label{sec:relaxation}

We use the following SDP relaxation for the entangled value of an arbitrary two-prover one-round game. The
SDP maximizes over the real vectors $\{u_a^s\}$, $\{v_b^t\}$, and $z$.

\begin{program}
\caption{}\label{sdp:1}
\begin{tabular}{p{0.12 \textwidth}p{0.8 \textwidth}}
\textbf{Maximize:} & $\sum_{abst} \pi(s,t) V(a,b\,|\,s,t) \ip{u_a^s, v_b^t}$\\
\textbf{Subject to:} & $\|z\|=1$\\
& $\forall s,t,~\sum_a u_a^s = \sum_b v_b^t = z$\\
& $\forall s,t,~\forall a \neq b,~ \ip{u_a^s,u_b^s}=0$ and $\ip{v_a^t,v_b^t}=0$\\
& $\forall s,t,a,b,~\ip{u_a^s,v_b^t} \ge 0$\\
\end{tabular}
%\frame{
\end{program}
\begin{remark}\label{remark:1}
Note that the second constraint is, strictly speaking, not an SDP constraint. However, it is
easy to see that there is an equivalent formulation in SDP language. For instance, we can replace the first
two constraints by $\sum_{a,b} \ip{u_a^s,v_b^t}=1$, $\sum_a \ip{u_a^s,u_a^s}=1$ and $\sum_b
\ip{v_b^t,v_b^t}=1$.
\end{remark}

 For a game $G$, let $\omega_{\sdp1}(G)$ be
the value of SDP~\ref{sdp:1}. We start by showing that it is indeed a relaxation of the
entangled value of the game.

\begin{lemma}
  \label{lemma:2} Let $G = G(\pi, V)$ be a (not necessarily unique) one-round two-prover game. Then
  $\omega^*(G) \leq \omega_{\sdp1}(G).$
\end{lemma}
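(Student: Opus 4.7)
The plan is to exhibit a feasible SDP solution whose objective value equals the success probability of an arbitrary entangled strategy; taking the supremum over strategies (and the limit $d\to\infty$) will then give $\omega^*(G)\leq\omega_{\sdp1}(G)$. So fix $\ket\psi\in\CC^d\otimes\CC^d$ and projective measurements $\{A_a^s\}$, $\{B_b^t\}$ implementing the strategy, and define the candidate vectors
\begin{align*}
\ket{u_a^s} &:= (A_a^s\otimes\id)\ket\psi, &
\ket{v_b^t} &:= (\id\otimes B_b^t)\ket\psi, &
\ket{z} &:= \ket\psi,
\end{align*}
which live in $\CC^{d^2}$. To cast these as \emph{real} vectors (as the SDP requires), I would use the standard identification of $\CC^N$ with $\RR^{2N}$ under which the real Euclidean inner product coincides with $\mathrm{Re}\ip{\cdot|\cdot}$ of the complex inner product; all the quantities I need to compute will turn out to be real anyway, so the $\mathrm{Re}$ is harmless.

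Next I would verify the four SDP constraints one by one. Normalisation $\|z\|=1$ is immediate from $\ip{\psi|\psi}=1$. Completeness $\sum_a A_a^s=\id$ and $\sum_b B_b^t=\id$ give $\sum_a\ket{u_a^s}=\sum_b\ket{v_b^t}=\ket\psi=\ket z$. Orthogonality $\ip{u_a^s,u_b^s}=\ip{\psi|A_a^sA_b^s\otimes\id|\psi}=0$ for $a\neq b$ follows from the $A_a^s$ being pairwise orthogonal projectors, and similarly on Bob's side. Nonnegativity $\ip{u_a^s,v_b^t}\ge0$ holds because
\begin{equation*}
\ip{u_a^s,v_b^t}=\ipb{\psi\big|A_a^s\otimes B_b^t\big|\psi}
\end{equation*}
is exactly the probability that Alice and Bob answer $(a,b)$ on questions $(s,t)$; in particular it is a real number in $[0,1]$, so passing to the real-part identification above costs nothing.

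Finally, by the same identity, the SDP objective evaluates to
\begin{equation*}
\sum_{abst}\pi(s,t)V(a,b\,|\,s,t)\ip{u_a^s,v_b^t}
=\sum_{abst}\pi(s,t)V(a,b\,|\,s,t)\ipb{\psi\big|A_a^s\otimes B_b^t\big|\psi},
\end{equation*}
which is precisely the winning probability of the chosen strategy. Since the strategy and dimension were arbitrary, taking the supremum and then $d\to\infty$ yields $\omega^*(G)\leq\omega_{\sdp1}(G)$. The only place any subtlety arises is in the transition from complex to real vectors, and that is handled cleanly by the observation that every inner product appearing in the SDP constraints and objective is already a real (indeed, nonnegative) quantity; there is no genuine analytic obstacle to overcome, so the argument is essentially a direct verification.
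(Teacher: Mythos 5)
Your proposal is correct and follows essentially the same route as the paper's own proof: defining $u_a^s=(A_a^s\otimes\id)\ket\psi$, $v_b^t=(\id\otimes B_b^t)\ket\psi$, $z=\ket\psi$, passing to real vectors via $\mathrm{Re}\oplus\mathrm{Im}$ (so that the real inner product equals the real part of the complex one), and verifying the constraints and objective directly. The only cosmetic difference is that the paper writes the real-vector construction out explicitly rather than invoking the standard $\CC^N\cong\RR^{2N}$ identification, but the content is identical.
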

\begin{proof}
Consider any strategy for the entangled provers, specified by a state $\ket \psi \in \CC^{d \times d}$ and
projectors $\{A_a^s\}$ and $\{B_b^t\}$. Define the vectors $\tilde{u}_a^s = (A_a^s \otimes \id) \ket \psi$
and ${\tilde v}_b^t = (\id \otimes B_b^t) \ket \psi$ in $\CC^{d \times d}$. Consider now the {\em real}
$2d^2$-dimensional vectors defined by $u_a^s={\rm Re}(\tilde{u}_a^s) \oplus {\rm Im}(\tilde{u}_a^s)$, $v_b^t={\rm Re}
(\tilde{v}_b^t) \oplus {\rm Im}(\tilde{v}_b^t)$ and $z={\rm Re}(\ket{\psi})\oplus {\rm Im}(\ket{\psi})$. Note that because
$\ip{\tilde{u}_a^s,\tilde{v}_b^t} =\bra \psi A_a^s \otimes B_b^t \ket \psi$ is real, we have that
$$\ipb{u_a^s,v_b^t}={\rm Re}(\tilde{u}_a^s){\rm Re}(\tilde{v}_b^t)+{\rm Im}(\tilde{u}_a^s){\rm Im}(\tilde{v}_b^t)=
 {\rm Re}(\ipb{\tilde{u}_a^s,\tilde{v}_b^t})=
\bra \psi A_a^s \otimes B_b^t \ket \psi \geq 0.$$ The other constraints follow from the observations that
$\sum_a \tilde{u}_a^s = \ket{\psi}=\sum_b {\tilde v}_b^t$, that $\ip{z,z} = {\rm Re}(\langle \psi | \psi \rangle) = 1$,
and that for $a \neq b$,
 $$\ipb{u_a^s,u_b^s}={\rm Re}(\ipb{\tilde{u}_a^s,\tilde{u}_b^s})={\rm Re}(\bra \psi A_a^s A_b^s \otimes I \ket \psi)=0,$$
 since $A_a^sA_b^s=0$, and similarly $\ip{v_a^t,v_b^t}=0$.
\end{proof}

In the case of uniform games there exists an optimal strategy in which
the provers' output distribution is uniform on $[k]$. This allows us to add the following constraint to the
original SDP:

\begin{quote}
{\bf Additional constraint for SDP 2:} $\quad \forall s,t,a,b,~\|u_a^s\| = \|v_b^t\| =
1/\sqrt{k}$.
\end{quote}
This gives a more constrained SDP relaxation for uniform games, which we call SDP 2 and whose
value we denote by $\omega_{\sdp2}(G)$. To see that
this is indeed a relaxation of uniform games, note that with the notation of the
proof of Lemma \ref{lemma:2},
$$\ip{u_a^s,u_a^s} =
\ip{\tilde{u}_a^s,\tilde{u}_a^s} = \bra \psi A_a^s \otimes \id \ket \psi, $$
which is equal to $\frac{1}{k}$ since the last expression is exactly Alice's
marginal distribution, and similarly for $v_b^t$. This extra constraint
will allow us to slightly improve our quantum rounding procedure.

\section{Quantum Rounding}

In this section we describe how to round the solution of our SDP to a quantum strategy. We start with an
informal outline of the rounding algorithm for the special case of {\em uniform} unique games, and then describe how to
modify it for general unique games. The formal description of the rounding algorithm is given as
Algorithm~\ref{alg:1} below, and it uses a particular measurement, given here as Measurement~\ref{meas:1}, as
a subroutine.

Our goal is the following. The SDP relaxation of a game gives us a solution $\{u_a^s\},\{v_b^t\}$, where for
fixed $s,t$ the inner products $\ipb{u_a^s,v_b^t}$ can be interpreted as a joint probability distribution on $(a,b)$ (note
that they are non-negative and sum to $1$). The marginal distribution on $a$ is given by $\|u_a^s\|^2$
(since $\sum_b \ipb{u_a^s,v_b^t} = \ipb{u_a^s,z} = \sum_{a'} \ipb{u_a^s,u_{a'}^s} = \|u_a^s\|^2$)
and on
$b$ by $\|v_b^t\|^2$; in particular, for SDP 2 these marginal distributions are {\em uniform}. The value of the SDP then
represents the winning probability in the corresponding game (given by $\pi$ and $V$), when the provers
answer according to this probability distribution. Hence we would like to design an entangled strategy that {\em
reproduces} this probability distribution as closely as possible, so that its winning probability will be
close to the value of the SDP solution.

The basic idea is to use the solution to the SDP to define a measurement for Alice and Bob on the {\em
maximally entangled} state $\ket{\psi} = \frac{1}{\sqrt{n}}\sum_{i=1}^n \ket{i,i}$. This state has the
property that for any orthonormal basis of {\em real} vectors $\{\ket{u_i}\}_{i=1}^n$ it can be written as
$\ket \psi=\frac{1}{\sqrt{n}}\sum_{i=1}^n \ket{u_i,u_i}$. This implies that if Alice measures in such a basis
and obtains outcome $i$, then Bob's state collapses to $\ket{u_i}$. If Bob then measures in a basis $\{\ket
{v_j}\}$ in which one of the vectors, say $\ket{v_j}$, is close to $\ket{u_i}$, then Bob's measurement
outcome is likely to be $j$.

We now describe our rounding algorithm for the simpler case of uniform unique games, using SDP 2.
Consider a solution of SDP 2 and assume that it lies in $\RR^n$ for some $n\geq 1$. Assume moreover that the
value of this solution is $1-\eps$ for some small $\eps >0$. This means that for a typical pair $s,t$ the sum
$\sum_{a=1}^k \ipb{u_a^s,v_{\sigma_{st}(a)}^t}$ is close to $1$, and hence, since the norms of all these vectors are
$1/\sqrt{k}$,  $u_a^s$ is typically close to $v_{\sigma_{st}(a)}^t$. We now use this solution to define local
projective measurements on the $n$-dimensional maximally-entangled state $\ket{\psi} =
\frac{1}{\sqrt{n}}\sum_{i=1}^n \ket{i,i}$. For fixed $s$, the $k$ vectors $u_a^s$ are orthogonal, and so,
after normalization, define part of a basis. We complete this to a basis of $\RR^n$  in an arbitrary way.
When Alice is asked question $s$, she measures her half of $\ket{\psi}$ in this basis, outputting $a$ if her
measurement result corresponds to the basis element $u_a^s$, and outputting (for the moment) nothing if she
obtains one of the extra basis elements. Similarly, when Bob is asked question $t$, he measures his half of
$\ket {\psi}$ in a basis that contains the vectors $\{v_b^t\}_b$, outputting $b$ if his measurement result
corresponds to the vector $v_b^t$, and nothing otherwise.

This rounding scheme has two properties --- one good, the other bad. First the good property: if Alice outputs
$a$,  then Bob will output $\sigma_{st}(a)$ with high probability, since the vector $u_a^s$ is close to the vector
$v_{\sigma_{st}(a)}^t$. Hence we have that if Alice does give an answer, Bob's answer will be correct with high
probability. The problem is that the probability that Alice does give an answer is $k/n$, which is typically
very small.

Luckily, the solution to this problem is easy. If Alice doesn't obtain a suitable outcome, she just starts
over, performing her measurement on a fresh maximally entangled state. She keeps performing her measurement
on fresh states until she obtains an outcome corresponding to a vector $u_a^s$. Bob does likewise, performing
his measurement on fresh states, making sure to use them in the same order as Alice. This fixes the problem
above, since now Alice answers with probability $1$. In doing so, however, it seems we have created a new
problem: it is entirely possible that Alice and Bob's measurements will succeed on different copies of the
maximally-entangled state, in which case their answers won't be correlated at all. But here we are saved by
the good property of our measurement: if Alice does obtain an outcome $a$ in a given round, then Bob will
obtain a correct outcome $b$ with high probability, assuming he hasn't already answered. Conversely, if Bob
obtains an outcome $b$ in a given round, then Alice will obtain a correct outcome $a$ with high probability,
assuming she hasn't already answered. All this means that, with high probability, their measurements succeed
on the same copy of $\ket {\psi}$.

Let us now consider general (not necessarily uniform) unique games. Our starting point now is a solution to
SDP~\ref{sdp:1} of value $1-\eps$. Again, for a typical pair $s,t$ we have that
$\sum_{a}\ipb{u_a^s,v_{\sigma_{st}(a)}^t}$ is close to $1$. However, in our rounding algorithm we have to add
an extra step to account for the fact that the vectors $u_a^s, v_b^t$ might not all be of the same length.
The projective measurement that we have constructed does not take this into account, because all basis
vectors after renormalization are equally likely to occur. Recall that our goal is to reproduce the
joint probability distribution on $a,b$ given by $\ipb{u_a^s,v_b^t}$, and, in particular, its marginal distributions on $a$ and $b$,
given by $\|u_a^s\|^2$ and $\|v_b^t\|^2$: Our rounding algorithm has to ensure that outcomes that correspond
to short vectors $u_a^s$ or $v_b^t$ should be less likely than those corresponding to longer vectors. To this
end we use a rejection sampling technique, as follows. Alice samples $\lambda$ uniformly from $[0,1]$. If Alice's measurement outcome is $a$, she outputs it if and only if $\lambda \leq
\|u_a^s\|^2$.  This ensures that the probability that Alice answers $a$ is the same as that given by the SDP
relaxation. Similarly, if Bob's measurement outcome is $b$, he
outputs it if and only if $\lambda \leq \|v_b^t\|^2$. Again a problem arises that, even if Alice's and Bob's
outcomes are otherwise correlated, the rejection sampling could make Alice accept and Bob reject (or vice versa) on the same
copy of $\ket \psi$. Luckily, we are helped again by the fact that on average, $u_a^s$ and
$v_{\sigma_{s,t}(a)}^t$ are close, and in particular have comparable length. Therefore, to coordinate the rejection
sampling procedure, Alice and Bob will use a {\em shared} random variable $\lambda$ in each step, which means
that with high probability they will either both accept or both reject.
(Note that it is easy to obtain a shared random variable from shared entanglement.)

\begin{algorithm}
\caption{Quantum rounding for unique games.}\label{alg:1}
\begin{tabular}{p{0.14 \textwidth}p{0.8 \textwidth}}
\textbf{Setup:} & Alice and Bob share many copies of an $n$-dimensional maximally entangled
state $\ket \psi = \frac{1}{\sqrt{n}} \sum_{i=1}^n \ket{i,i}$, for some fixed basis $\{\ket{i}\}$ of $\CC^n$,
as well as a sequence $\Lambda = (\lambda_1, \lambda_2, \dots)$ of real numbers, where the $\lambda_i$ are independent and each is sampled uniformly from $[0,1]$. \\
\textbf{Alice:} & On input $s$, performs the measurement \textsc{Measure}($u_1^s,u_2^s, \ldots, u_k^s$)
 on her share of the maximally entangled states and the sequence $\Lambda$. \\
\textbf{Bob:} & On input $t$, performs the measurement \textsc{Measure}($v_1^t,v_2^t, \ldots,
v_k^t$)
 on his share of the maximally entangled states and the sequence $\Lambda$. \\
\end{tabular}
\end{algorithm}

\begin{measurement}
\caption{The measurement \textsc{Measure}$({x_1, x_2, \ldots, x_k})$ used in
Algorithm~\ref{alg:1}.} \label{meas:1}
\begin{tabular}{p{0.24 \textwidth}p{0.7 \textwidth}}
\textbf{Input:} & A state on a Hilbert space $\calH = \bigotimes_{r=1}^\infty \calH_r$, where each $\calH_r\cong \CC^n$,
 and a sequence of real numbers $\Lambda = ( \lambda_1, \lambda_2, \ldots)$, where each $\lambda_r \in [0,1]$.\\
\textbf{Parameters:} & $k$ orthogonal vectors $x_1,x_2, \ldots, x_k \in \RR^n$. \\
\textbf{Output:} & An integer $m \in \{1, 2, \ldots, k\}$. \\
\textbf{Measurement:}& Define a POVM on $\CC^n$ with elements
\begin{align*}
P_i = \kbbig{\frac{x_i}{\|x_i\|}} \text{ for $i = 1, 2, \ldots, k$ and } P_0 = \id - \sum_{i=1}^k P_i,
\end{align*}
where for a vector $w \in \RR^n$ we
write $\ket{w}=\sum_i (w)_i \ket{i}$ for its embedding into $\CC^n$.\\
& \textbf{For} $r=1,2,\ldots$ do: \newline \mbox{\hspace{1cm}} Measure $\calH_r$ using POVM $(P_0, \ldots,
P_k)$, obtaining outcome $m$. \newline \mbox{\hspace{1cm}} \textbf{If} ($m\neq 0$ and $\lambda_r \leq
\|x_m\|^2$) \textbf{then} output $m$ and exit.
\end{tabular}
\end{measurement}

\subsection{Analysis of the measurement procedure}

\begin{lemma}
  \label{lemma:1}
Let $x_1,\ldots,x_k$ and $y_1,\ldots,y_k$ be two sequences of orthogonal vectors in $\RR^n$ such that
$\sum_{i=1}^k \|x_i\|^2=\sum_{i=1}^k \|y_i\|^2=1$. Assume Alice and Bob apply Measurement~\ref{meas:1}, Alice
using $(x_i)$ and Bob using $(y_i)$. For any $i,j \in \{1,\ldots,k\}$ define
$$ q_{i,j} := \ipB{\frac{x_i}{\|x_i\|},\frac{y_j}{\|y_j\|}}^2 \min(\|x_i\|^2, \|y_j\|^2)$$
and let $q_{\rm total} := \sum_{i,j} q_{i,j}$.
Then for any $i,j\in \{1,\ldots,k\}$, the probability that Alice outputs $i$
and Bob outputs $j$ is at least
$$ \frac{q_{i,j}}{2-q_{\rm total}} .$$
\end{lemma}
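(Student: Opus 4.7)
The plan is to pin down the single-round joint statistics exactly and then exploit independence across rounds to sum a geometric series. Using the identity $\bra{\psi}(M\otimes N)\ket{\psi}=\tfrac{1}{n}\mathrm{tr}(MN^T)$ for the $n$-dimensional maximally entangled state $\ket{\psi}$, together with the fact that the $x_i$ and $y_j$ are real (so the rank-one projectors $\kb{x_i/\|x_i\|}$ and $\kb{y_j/\|y_j\|}$ coincide with their own transposes in the chosen basis), I would compute that the joint probability that Alice's POVM returns $i$ and Bob's returns $j$ in a given round equals $\tfrac{1}{n}\langle x_i/\|x_i\|,\,y_j/\|y_j\|\rangle^2$. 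Since the shared $\lambda_r\in[0,1]$ is independent of the POVM outcomes, both players pass the rejection-sampling test precisely when $\lambda_r\leq \min(\|x_i\|^2,\|y_j\|^2)$, contributing an additional factor of $\min(\|x_i\|^2,\|y_j\|^2)$. Multiplying yields exactly $q_{i,j}/n$ for the probability that both players exit a given round with outcomes $(i,j)$.

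Summing the joint probability over $j$ and using the hypothesis $\sum_i\|x_i\|^2=1$, the marginal probability that Alice exits in a given round is $1/n$, and likewise for Bob. By inclusion-exclusion, the probability that at least one of them exits in a given round is $(2-q_{\mathrm{total}})/n$. Because each round uses a fresh copy of $\ket{\psi}$ and a fresh independent $\lambda_r$, the joint per-round events are i.i.d.\ across $r$. Consequently, the probability that the first round in which anyone exits is round $r$ \emph{and} that both Alice and Bob exit on that very round with outcomes $(i,j)$ equals $\bigl(1-(2-q_{\mathrm{total}})/n\bigr)^{r-1}\cdot q_{i,j}/n$. The geometric sum over $r\geq 1$ collapses to $q_{i,j}/(2-q_{\mathrm{total}})$.

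The event "Alice outputs $i$ and Bob outputs $j$" contains the event "both exit on the same round with outcomes $(i,j)$"—the former also counts situations where Alice and Bob exit on different rounds but happen to produce the labels $i$ and $j$ respectively—so the claimed lower bound follows. The one point that deserves a little care is the cross-round independence argument, which rests on the fact that each $\mathcal{H}_r$ and each $\lambda_r$ is freshly sampled and independent of all prior history, so conditioning on "neither has exited yet" does not perturb the distribution of the next round's resources. Beyond that, the argument is a clean single-round amplitude calculation followed by a geometric-series sum, and I do not anticipate any genuine obstacle.
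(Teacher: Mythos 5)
Your proposal is correct and follows essentially the same route as the paper's proof: compute the single-round joint probability $q_{i,j}/n$ (the paper does this via the collapse of Bob's state after Alice's projective measurement, which is the same calculation as your trace identity), observe that each player exits in a given round with probability $1/n$, and then sum the geometric series over rounds where the ratio is the per-round retry probability. Your inclusion-exclusion step giving $(2-q_{\mathrm{total}})/n$ is just a rephrasing of the paper's identity $p_{\text{done}}+p_{\text{fail},1}+p_{\text{fail},2}=2/n-p_{\text{done}}$.
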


\begin{proof}
We start by analyzing one round of the measurement, i.e., Alice and Bob share a maximally entangled
$n$-dimensional state and a random number $\lambda \in [0,1]$. Each performs a measurement given by his or
her input vectors, and outputs the outcome $m$ if $m\neq 0$ and $\lambda \le \|x_m\|^2$ (resp. $\lambda \le
\|y_m\|^2$), or nothing otherwise.

A round can end in one of four possible ways, to which we assign probabilities as follows:
\begin{itemize}
\item ($p_\text{done}$) Both Alice and Bob give an output;
\item ($p_{\text{fail},1}$) Alice gives an output while Bob does not;
\item ($p_{\text{fail},2}$) Bob gives an output while Alice does not;
\item ($p_\text{retry}$) Neither Alice nor Bob gives an output.
\end{itemize}
Hence $p_\text{done}+p_{\text{fail},1}+p_{\text{fail},2}+p_\text{retry}=1$. Let us also define $p_{i,j}$ for
$i,j \in \{1,\ldots,k\}$ as the probability that Alice outputs $i$ and Bob outputs $j$ in one round. Notice
that $p_\text{done} = \sum_{i,j=1}^k p_{i,j}.$

We now compute each of these probabilities. By construction, the probability that Alice obtains an outcome $i
\neq 0$ from her POVM is exactly $1/n$. Conditioned on that happening, Bob's state collapses to the pure
state given by the vector $\ket{x_i}/\|x_i\|$. Therefore, the conditional probability that he obtains an outcome $j\neq
0$ in his POVM is given by $\ipb{\frac{x_i}{\|x_i\|},\frac{y_j}{\|y_j\|}}^2$. Finally, conditioned on Alice
measuring $i \neq 0$ and Bob measuring $j\neq 0$, the probability that both actually output their values is
$\min(\|x_i\|^2, \|y_j\|^2)$. Hence we see that for any $i,j \in \{1,\ldots,k\}$, the probability that in one
round of the measurement Alice outputs $i$ and Bob outputs $j$ is
\begin{align*}
  p_{i,j} := \frac{1}{n}\ipB{\frac{x_i}{\|x_i\|},\frac{y_j}{\|y_j\|}}^2 \min(\|x_i\|^2, \|y_j\|^2)=\frac{1}{n}q_{i,j}.
\end{align*}
Moreover, it is easy to see that the probability that Alice gives an output is
$$ \sum_{i=1}^m \frac{1}{n} \|x_i\|^2 = \frac{1}{n}$$
and similarly for Bob. This implies that
$$ p_{\text{fail},1} = p_{\text{fail},2} = \frac{1}{n} - p_\text{done}.$$

To complete the proof, let us consider the probability that in Measurement~\ref{meas:1},
Alice outputs $i$ and Bob outputs $j$. This probability is lower bounded by the probability
that Alice outputs $i$ and Bob outputs $j$ in the same round. The latter probability is given by
\begin{align*}
  \sum_{r=0}^\infty (p_\text{retry})^r p_{i,j} =
  \frac{p_{i,j}}{1-p_\text{retry}} =
  \frac{p_{i,j}}{p_\text{done}+ p_\text{fail,1}+ p_\text{fail,2}} =
  \frac{p_{i,j}}{\frac{2}{n} - p_\text{done}} =
  \frac{q_{i,j}}{2-q_\text{total}} .
\end{align*}
\end{proof}

\begin{corollary}
  \label{cor:goodanswers}
Let $V$ be a subset of $\{1,\ldots,k\}^2$. Then, in the setting of Lemma~\ref{lemma:1},
the probability that Alice's output $i$ and Bob's output $j$ are such that $(i,j) \in V$
is at least
$$ \frac{p_V}{2-p_V} \ge 1- 2(1-p_V), $$
where
$$ p_V := \sum_{i,j \in V} \ipB{\frac{x_i}{\|x_i\|},\frac{y_j}{\|y_j\|}}^2 \min(\|x_i\|^2, \|y_j\|^2).$$
\end{corollary}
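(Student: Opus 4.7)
The plan is to derive the corollary directly from Lemma 3.1 by summing the per-pair lower bound over $V$ and then performing a simple algebraic manipulation.

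First I would observe that the events ``Alice outputs $i$ and Bob outputs $j$'' for distinct $(i,j)$ are disjoint, so the probability that the joint output lies in $V$ is the sum over $(i,j)\in V$ of the corresponding probabilities. Applying Lemma~\ref{lemma:1} to each such pair, this sum is at least
\begin{align*}
\sum_{(i,j)\in V} \frac{q_{i,j}}{2-q_{\rm total}} \;=\; \frac{p_V}{2-q_{\rm total}}.
\end{align*}

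Next I would argue that $q_{\rm total}\le 1$, so that the denominator is at least $1$, and then use $p_V \le q_{\rm total}$ to replace $q_{\rm total}$ by $p_V$ in the denominator (this only makes the bound smaller). The cleanest way to see $q_{\rm total}\le 1$ is to reuse the calculation inside the proof of Lemma~\ref{lemma:1}: there, $\sum_{i,j} p_{i,j} = p_{\rm done}$, and $p_{\rm done}$ is bounded above by the marginal probability that Alice gives any output in one round, namely $1/n$. Since $p_{i,j} = q_{i,j}/n$, this forces $q_{\rm total}\le 1$. Hence
\begin{align*}
\frac{p_V}{2-q_{\rm total}} \;\ge\; \frac{p_V}{2-p_V},
\end{align*}
which is the first half of the claimed bound.

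Finally, to obtain the weaker but more readable form $p_V/(2-p_V)\ge 1-2(1-p_V)$, I would set $x := 1-p_V\in[0,1]$ and verify the identity
\begin{align*}
\frac{1-x}{1+x} - (1-2x) \;=\; \frac{2x^2}{1+x} \;\ge\; 0,
\end{align*}
which is manifestly non-negative. I do not expect any real obstacle: the substantive work was already carried out in Lemma~\ref{lemma:1}, and the corollary is essentially a matter of summing and simplifying, with the only subtlety being the verification $q_{\rm total}\le 1$, which follows for free from the single-round analysis in that lemma.
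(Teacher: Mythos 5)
Your proof is correct and matches the (implicit) derivation the paper intends: the corollary follows from Lemma~\ref{lemma:1} simply by summing the per-pair lower bound over $(i,j)\in V$ to get $p_V/(2-q_{\rm total})$, observing $p_V\le q_{\rm total}\le 1$ (the latter being exactly $n\,p_{\rm done}\le 1$ from the lemma's proof) to weaken the denominator to $2-p_V$, and then checking the elementary inequality $\frac{p_V}{2-p_V}\ge 1-2(1-p_V)$. The paper states the corollary without proof, so there is no alternative route to compare against; your argument is the standard one.
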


\subsection{Analysis of the quantum rounding}

We first analyze the easier case of uniform unique games.
We remark that we could have slightly simplified the algorithm for this case by
avoiding the rejection sampling step, but for convenience we keep it since
it does not affect our results.

\begin{theorem}[Uniform unique games]
  \label{thm:lingames}
Let $G$ be a uniform unique game. Suppose that $\omega_{\sdp2}(G) = 1 -\eps$. Then $\omega^*(G) \geq 1 - 4 \eps$.
\end{theorem}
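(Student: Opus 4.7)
The plan is to run Algorithm~\ref{alg:1} on an optimal solution to SDP~\ref{sdp:2}, and then bound the resulting entangled value question-by-question using Corollary~\ref{cor:goodanswers} with the winning set $V_{st} := \{(a,\sigma_{st}(a)) : a \in [k]\}$.

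First I would fix a pair of questions $(s,t)$ and apply Corollary~\ref{cor:goodanswers} with $x_a = u_a^s$ and $y_b = v_b^t$; these are indeed orthogonal families and each family has squared norm sum equal to $1$ (since the additional SDP~\ref{sdp:2} constraint gives $\|u_a^s\|^2 = \|v_b^t\|^2 = 1/k$). The uniform-norm property makes the quantity $p_{V_{st}}$ simplify considerably: $\min(\|u_a^s\|^2,\|v_{\sigma_{st}(a)}^t\|^2) = 1/k$ and dividing by the norms inside the inner product pulls out a factor of $k^2$, so that
\[
p_{V_{st}} \;=\; k \sum_{a=1}^k \bigl\langle u_a^s,\,v_{\sigma_{st}(a)}^t\bigr\rangle^{2}.
\]

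Next I would relate this to the SDP value. Writing $\alpha_a := \langle u_a^s,v_{\sigma_{st}(a)}^t\rangle \ge 0$ and $S_{st} := \sum_a \alpha_a$, which is precisely the contribution of the pair $(s,t)$ to the SDP objective, the power-mean / Cauchy--Schwarz inequality gives $\sum_a \alpha_a^2 \ge S_{st}^2/k$, hence
\[
p_{V_{st}} \;\ge\; S_{st}^{2}.
\]
Corollary~\ref{cor:goodanswers} then yields a per-question winning probability of at least $1 - 2(1 - p_{V_{st}}) \ge 1 - 2(1 - S_{st}^2) \ge 1 - 4(1 - S_{st})$, where the last step uses $1 - S^2 = (1-S)(1+S) \le 2(1-S)$ for $S \in [0,1]$.

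Finally, I would average over the question distribution $\pi$. Since $\mathbb{E}_{(s,t)\sim\pi}[S_{st}] = \omega_{\sdp2}(G) = 1-\eps$, linearity of expectation gives
\[
\omega^*(G) \;\ge\; \mathbb{E}_{(s,t)\sim\pi}\bigl[1 - 4(1 - S_{st})\bigr] \;=\; 1 - 4\eps,
\]
as required. The only non-routine step is the reduction $p_{V_{st}} \ge S_{st}^2$, which is where the uniform SDP constraint $\|u_a^s\| = \|v_b^t\| = 1/\sqrt{k}$ is crucial; without it, one cannot factor out the norm so cleanly and the extra factor of $2$ (instead of $4$) is lost, which is exactly what distinguishes this theorem from the general unique-games result.
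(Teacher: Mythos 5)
Your proof is correct and follows essentially the same approach as the paper: it fixes a question pair, applies Corollary~\ref{cor:goodanswers}, uses the uniform-norm SDP~2 constraint to reduce $p_{V_{st}}$ to $k\sum_a\langle u_a^s,v_{\sigma_{st}(a)}^t\rangle^2$, applies Cauchy--Schwarz to get $p_{V_{st}}\ge S_{st}^2$, and then averages over $\pi$. The only cosmetic difference is that the paper writes the final bound via $p'_{\rm succ}\ge(1-\tilde\eps)^2\ge 1-2\tilde\eps$ rather than your equivalent factorization $1-S^2\le 2(1-S)$.
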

\begin{proof}
Fix a solution $\{u_a^s\}$, $\{v_b^t\}$, $z$ to SDP~2 with value $1-\eps$ and consider the
strategy of Alice and Bob given by Algorithm~\ref{alg:1}. Our goal is to show that this strategy has success
probability at least $1-4\eps$. In order to show this, it suffices to show that for any questions $s,t$, the
success probability of Alice and Bob on these questions is at least $1 - 4(1-\sum_{ab} V(a,b\,|\,s,t)
\ip{u_a^s, v_b^t})$.

So from now on fix a pair of questions $s,t$ and let $\sigma$ be the permutation corresponding to the
constraint between $s$ and $t$, i.e., $V(a,b\,|\,s,t)=1$ if and only if $b = \sigma(a)$. For $i=1,\ldots,k$,
define $u_i = u_i^s$ and $v_i = v_{\sigma(i)}^t$. Suppose that $\sum_i \ip{u_i,v_i} \ge 1-\tilde{\eps}$ for
some $\tilde{\eps} \ge 0$ and recall that our goal is to show that Alice and Bob succeed with probability
at least $1-4{\tilde \eps}$. By Corollary~\ref{cor:goodanswers}, their success probability is at least
$$ p_{\rm succ} \ge 1 - 2(1-p'_{\rm succ}),$$
where
\begin{align}\label{eq:pprimesucc}
p'_{\rm succ} = \sum_{i=1}^k \ipB{\frac{u_i}{\|u_i\|},\frac{v_i}{\|v_i\|}}^2 \min(\|u_i\|^2, \|v_i\|^2).
\end{align}
It therefore suffices to show that $p'_{\rm succ} \ge 1-2\tilde \eps$.
Using the extra constraints in SDP~2 and the Cauchy-Schwarz inequality,
$$ p'_{\rm succ} = k \ \sum_{i=1}^k \ip{u_i,v_i}^2 \ge \bigg( \sum_{i=1}^k \ip{u_i,v_i}\bigg)^2 \ge 1-2\tilde{\eps}.$$
\end{proof}

\begin{remark}\label{remark:2}
Notice that among the five constraints in SDP $2$ we only used the third constraint in
SDP~\ref{sdp:1} on the orthogonality of the vectors, and the additional constraint of SDP 2. Moreover the
vector $z$ is unnecessary.
\end{remark}

\begin{theorem}[Unique games]
  \label{thm:uniqgames}
Let $G$ be a unique game. Suppose that $\omega_{\sdp1}(G) = 1 -\eps$. Then $\omega^*(G) \geq 1 - 6 \eps$.
\end{theorem}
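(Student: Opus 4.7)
The plan is to follow the approach of the uniform case (Theorem~\ref{thm:lingames}): fix a pair of questions $(s,t)$ and let $\sigma := \sigma_{st}$, $u_i := u_i^s$, $v_i := v_{\sigma(i)}^t$, $\alpha_i := \|u_i\|$, $\beta_i := \|v_i\|$, and $\tilde\eps := 1 - \sum_i \langle u_i, v_i\rangle$. By Corollary~\ref{cor:goodanswers} (applied with $V = \{(i,\sigma(i)) : i \in [k]\}$), the success probability on $(s,t)$ is at least $1 - 2(1-p'_{\rm succ})$, where $p'_{\rm succ}$ is as in \eqref{eq:pprimesucc}. Averaging over $(s,t)\sim \pi$, the theorem therefore reduces to the per-pair estimate $p'_{\rm succ} \ge 1 - 3\tilde\eps$, which would yield $\omega^*(G) \ge 1 - 6\eps$.

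The crucial new ingredient, compared with the uniform case, is the inequality $\langle u_i, v_i\rangle \le \min(\alpha_i^2,\beta_i^2)$, which controls the non-uniformity of the vector norms. Using the SDP constraints $\sum_{i'} u_{i'} = z = \sum_j v_j$ together with $\langle u_i, u_{i'}\rangle = 0$ for $i \neq i'$, one computes
\[
\sum_j \langle u_i, v_j\rangle \;=\; \langle u_i, z\rangle \;=\; \sum_{i'} \langle u_i, u_{i'}\rangle \;=\; \alpha_i^2;
\]
since every summand on the left is nonnegative by the SDP constraint $\langle u_a^s, v_b^t\rangle \ge 0$, this gives $\langle u_i, v_i\rangle \le \alpha_i^2$, and symmetrically $\langle u_i, v_i\rangle \le \beta_i^2$. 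Summing over $i$ and using $\sum_i \alpha_i^2 = \sum_i \beta_i^2 = 1$ then yields $\sum_i \min(\alpha_i^2,\beta_i^2) \ge 1 - \tilde\eps$, and hence $\sum_i \max(\alpha_i^2,\beta_i^2) = 2 - \sum_i \min(\alpha_i^2,\beta_i^2) \le 1 + \tilde\eps$.

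To finish, rewrite each term of $p'_{\rm succ}$ using the identity $c_i^2 \min(\alpha_i^2,\beta_i^2) = \langle u_i, v_i\rangle^2 / \max(\alpha_i^2,\beta_i^2)$ (with the convention that indices where $\alpha_i = \beta_i = 0$ contribute zero), and apply the elementary bound $x^2/y \ge 2x - y$ (valid for $y>0$, from $(x-y)^2 \ge 0$) term by term:
\[
p'_{\rm succ} \;=\; \sum_i \frac{\langle u_i, v_i\rangle^2}{\max(\alpha_i^2,\beta_i^2)} \;\ge\; 2\sum_i \langle u_i, v_i\rangle \,-\, \sum_i \max(\alpha_i^2,\beta_i^2) \;\ge\; 2(1-\tilde\eps) - (1+\tilde\eps) \;=\; 1 - 3\tilde\eps.
\]
The hardest step is finding a bound for $\sum_i \max(\alpha_i^2,\beta_i^2)$ that is linear in $\tilde\eps$: a naive Cauchy--Schwarz argument using only $\sum_i (\alpha_i - \beta_i)^2 \le 2\tilde\eps$ would give $\sum_i |\alpha_i^2 - \beta_i^2| = O(\sqrt{\tilde\eps})$ and lose a $\sqrt{\tilde\eps}$ factor. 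Exploiting the non-negativity constraint $\langle u_a^s, v_b^t\rangle \ge 0$ of SDP~\ref{sdp:1}, which was not needed in the uniform proof, is what unlocks the sharper per-coordinate estimate and thus the factor~$6$.
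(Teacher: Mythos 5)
Your proof is correct, and it takes a genuinely different (and arguably cleaner) route than the paper's. Both rely on the same SDP constraints (orthogonality, nonnegativity, and the marginal identities $\sum_j\ip{u_i,v_j}=\|u_i\|^2$, $\sum_i\ip{u_i,v_j}=\|v_j\|^2$), and both reduce to showing $p'_{\rm succ}\ge 1-3\tilde\eps$ per question pair, but the organization of the inequality chain is different. The paper introduces the intermediate quantity $p''_{\rm succ}=\sum_i\ip{u_i,v_i}^2/(\|u_i\|\|v_i\|)$ (geometric mean in the denominator), bounds it below by $1-2\tilde\eps$ via a Jensen/Cauchy--Schwarz convexity step with weights $\|u_i\|\|v_i\|/F$, and then separately bounds the correction $p''_{\rm succ}-p'_{\rm succ}\le\tfrac12\sum_i\bigl|\|u_i\|^2-\|v_i\|^2\bigr|\le\tilde\eps$ using $\sqrt{ab}-\min(a,b)\le|a-b|/2$ and the cancellation of the diagonal terms $\ip{u_i,v_i}$ in $\ip{u_i,z}-\ip{v_i,z}$. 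You instead make the clean per-coordinate observation $\ip{u_i,v_i}\le\min(\|u_i\|^2,\|v_i\|^2)$ (the joint ``probability'' is dominated by either marginal), which directly yields $\sum_i\max(\|u_i\|^2,\|v_i\|^2)\le 1+\tilde\eps$, then rewrite $p'_{\rm succ}=\sum_i\ip{u_i,v_i}^2/\max(\|u_i\|^2,\|v_i\|^2)$ and finish in one line with the elementary $x^2/y\ge 2x-y$. This avoids the two-step decomposition entirely: your route replaces the paper's Jensen step plus correction-term estimate with a single termwise inequality, at the cost of no extra hypotheses, and lands on the same $1-3\tilde\eps$ bound. Both proofs use the nonnegativity constraint $\ip{u_a^s,v_b^t}\ge 0$ in an essential way; you are right to flag it as the ingredient that distinguishes this from the uniform case.

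One small remark: the case where exactly one of $\|u_i\|,\|v_i\|$ vanishes is also harmless (then $\ip{u_i,v_i}=0$ and the term contributes $0$ to $p'_{\rm succ}$ while contributing $-\max(\|u_i\|^2,\|v_i\|^2)\le 0$ to the lower bound), so the convention you state plus this observation covers all degenerate indices.
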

\begin{proof}
As in the proof of Theorem~\ref{thm:lingames}, we have vectors $u_i$, $v_i$, this time coming from SDP~\ref{sdp:1},
satisfying $\sum_i \ip{u_i,v_i} \ge 1-\tilde{\eps}$.
Our goal now is to show that $p'_{\rm succ} \ge 1-3\tilde{\eps}$ where $p'_{\rm succ}$ is defined as in Eq.~\eqref{eq:pprimesucc}.

Let $F := \sum_i \|u_i\|  \|v_i\|$. We
first notice that
$$ F
   \le \bigg(\sum_i \|u_i\|^2\bigg)^{1/2} \bigg(\sum_i \|v_i\|^2\bigg)^{1/2} = 1.$$
Define
$$ p''_{\rm succ} := \sum_i \ipB{\frac{u_i}{\|u_i\|},\frac{v_i}{\|v_i\|}}^2 \|u_i\| \|v_i\|.$$
Then, by convexity,
\begin{align*}
p''_{\rm succ} &= F \ \sum_i \frac{\|u_i\| \|v_i\|}{F} \ipB{\frac{u_i}{\|u_i\|},\frac{v_i}{\|v_i\|}}^2 \\
 & \ge F \bigg( \sum_i \frac{\|u_i\| \|v_i\|}{F} \ipB{\frac{u_i}{\|u_i\|},\frac{v_i}{\|v_i\|}} \bigg)^2 \\
 & = \frac{1}{F} \bigg( \sum_i \ip{u_i,v_i} \bigg)^2 \\
 & \ge 1 - 2\tilde \eps.
\end{align*}
Moreover, using the fact that for any nonnegative $a,b \in \R$, $\sqrt{ab} - \min(a,b) \le |a-b|/2$,
\begin{align*}
p''_{\rm succ} - p'_{\rm succ}&\le
  \frac{1}{2} \sum_i \ipB{\frac{u_i}{\|u_i\|},\frac{v_i}{\|v_i\|}}^2 | \|u_i\|^2 -  \|v_i\|^2 | \\
  &\le   \frac{1}{2} \sum_i | \|u_i\|^2 -  \|v_i\|^2 | \\
  &=  \frac{1}{2} \sum_i | \ip{u_i,z} -  \ip{v_i,z} | \\
  &=  \frac{1}{2} \sum_i \bigg| \sum_{j \neq i} \ip{u_i,v_j} -  \sum_{j \neq i} \ip{v_i,u_j}  \bigg| \\
  & \le \frac{1}{2} \bigg( \sum_i \sum_{j \neq i} \ip{u_i,v_j} + \sum_i  \sum_{j \neq i} \ip{v_i,u_j} \bigg)
  \le \tilde \eps.
\end{align*}
\end{proof}

\begin{remark}\label{remark:3}
The above analysis is `locally tight' in the following sense.
For any small $\tilde{\eps}>0$ and large enough $k$, there exist two sequences of orthogonal vectors
$u_1,\ldots,u_k$ and $v_1,\ldots,v_k$ satisfying
(i) $\sum_i u_i = \sum_i v_i$ has norm $1$,
(ii) for all $i,j$, $\ip{u_i, v_j} \ge 0$, (iii)
$\sum_{i=1}^k \ip{u_i,v_i} = 1-\tilde{\eps}$, and (iv) the probability that the quantum
rounding procedure produces a pair $(i,j)$ with $i=j$ is roughly $1-6\tilde{\eps}$.
Let $a = \sqrt{(1-\tilde{\eps})/k}$ and $b = \sqrt{2\tilde{\eps}/k}$, and let $e_1,\ldots,e_k,f_1,\ldots,f_{k/2}$
be orthonormal unit vectors. Our vectors are given by $u_i = a e_i + b f_i$, $v_i =a e_i$ for $i=1,\ldots,\frac{k}{2}$,
and $u_i=a e_i$, $v_i = a e_i + b f_{i-\frac{k}{2}}$ for $i=\frac{k}{2}+1,\ldots,k$.
\end{remark}
\medskip

Our final theorem deals with $d$-to-$d$ games, and uses the following combinatorial claim.

\begin{claim}\label{clm:acycligraph}
Let $(V,E)$ be a directed acyclic graph with non-negative weights associated with its vertices,
and let $V' \subseteq V$ denote the set of vertices with outdegree zero.
Assume, moreover, that all indegrees are at most $D$, and that the weight of each
vertex in $V\setminus V'$ is smaller by a factor of at least $2D$ than the
sum of weights of its out-neighbors. Then the total weight in $V'$ is at least half
the total weight in $V$.
\end{claim}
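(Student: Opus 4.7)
The plan is to prove the claim by a one-line double-counting argument, summing the hypothesized weight inequality over all non-sink vertices. Writing $w(v)$ for the weight of $v$, the hypothesis states
\begin{equation*}
 w(v) \;\le\; \frac{1}{2D} \sum_{u \,:\, (v,u) \in E} w(u) \qquad \text{for every } v \in V \setminus V'.
\end{equation*}
First I would sum this inequality over all $v \in V \setminus V'$. Then, on the right-hand side I would swap the order of summation: each vertex $u$ then contributes $w(u)$ multiplied by the number of edges into $u$ coming from $V \setminus V'$, which is bounded by the indegree of $u$, which in turn is at most $D$ by hypothesis.

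This manipulation yields
\begin{equation*}
\sum_{v \in V \setminus V'} w(v) \;\le\; \frac{1}{2D} \sum_{u \in V} \bigl|\{v : (v,u)\in E\}\bigr|\, w(u) \;\le\; \frac{1}{2D} \cdot D \cdot \sum_{u \in V} w(u) \;=\; \frac{1}{2} \sum_{u \in V} w(u).
\end{equation*}
Rearranging (writing the right-hand total as the sum over $V'$ plus the sum over $V\setminus V'$ and canceling) gives $\sum_{v \in V \setminus V'} w(v) \le \sum_{v \in V'} w(v)$, which is equivalent to the desired bound $\sum_{v \in V'} w(v) \ge \frac{1}{2}\sum_{v \in V} w(v)$.

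There is really no obstacle to this approach; the calibration between the ``factor $2D$'' slack in the hypothesis and the indegree bound $D$ is exactly what makes double counting work with a factor of $2$ to spare. It is worth noting that neither the acyclicity of $(V,E)$ nor even the directed nature of the graph is actually used in this argument; only the indegree bound on each vertex and the weight inequality on non-sinks are invoked. The acyclicity hypothesis in the statement is presumably stated because that is the setting in which the claim is applied later, rather than because it is logically necessary for the bound.
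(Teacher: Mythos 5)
Your double-counting proof is correct and is genuinely simpler than the paper's argument. The paper first fixes a topological ordering of the DAG (using acyclicity), then runs an iterative process maintaining auxiliary values $a_i$: initially $a_i = w_i$, then it sweeps through the non-sinks in topological order, distributing $w_j/D$ to each out-neighbor $j$ and zeroing out $a_i$. Two invariants are established -- $a_i \le 2w_i$ (using the indegree bound) and $\sum_i a_i$ is non-decreasing (using the weight hypothesis) -- and comparing initial and final states gives the claim. Your proof compresses all of this into a single change of summation order: summing the weight hypothesis over all non-sinks, swapping the double sum, and invoking the indegree bound to cap each $w(u)$'s multiplicity at $D$. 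Your observation that acyclicity is never actually used is accurate -- the paper's approach does rely on the topological order to process vertices so that each $a_i$, once zeroed, stays zeroed, whereas yours does not touch that structure at all. Both proofs buy the same constant; yours buys it with fewer hypotheses and less machinery. The only thing the paper's process-based proof offers that yours does not is a hands-on picture of ``pushing weight downstream,'' which may have been what led the authors to the statement in the first place, but as a proof your version is preferable.
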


For example, for any $m,D\ge 1$, consider the graph with $mD$ nodes, arranged in $m$
layers of $D$ nodes each. We assign weight $2^i$ to nodes in layer $i$,
and place a directed edge from any node in layer $i$ to any node in layer $i+1$.
Notice that this graph satisfies the conditions of the claim, and
that the weight of the nodes with outdegree zero, namely the nodes in layer $m$,
is essentially half of the total weight.

\begin{proof}
Assume without loss of generality that $V=\{1,\ldots,n\}$, that all edges are
facing forward (i.e., are of the form $(i,j)$ with $j>i$), and that $V'=\{n-m+1,\ldots,n\}$
for $m =|V'| \ge 1$. Let $w_i$ denote the weight of vertex $i$. Consider the following process. Initially, for $i=1,\ldots,n$,
set $a_i$ to be $w_i$.
Then, for $i=1,\ldots,n-m$ do the following: for each edge $(i,j)$ leaving $i$,
add $w_j/D$ to $a_j$, and then set $a_i = 0$.

Notice the following two properties. First, since the in-degrees
are at most $D$, we always have $a_i \le 2 w_i$ for all $i$. Second, $\sum_{i=1}^n a_i$
never decreases during this process. The reason is that although we decrease the sum by $a_i$
when we set $a_i$ to $0$, we also increase it by the sum of weights of $i$'s out-neighbors divided by $D$,
which is by assumption at least $2w_i \ge a_i$. Therefore, by considering the
state at the end of the process, we obtain that
$$ \sum_{i=1}^n w_i \le \sum_{i=1}^n a_i = \sum_{i=n-m+1}^n a_i \le 2\,\sum_{i=n-m+1}^n w_i.$$
\end{proof}

\begin{theorem}[$d$-to-$d$ games]\label{thm:ddgames}
Let $G$ be a $d$-to-$d$ game for some $d \ge 2$, and assume that $\omega_{\sdp1}(G) = 1$.
Then $\omega^*(G) \geq \frac{1}{20(d-1)}$.
\end{theorem}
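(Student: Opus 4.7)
Fix a pair of questions $(s,t)$ with $\pi(s,t)>0$. Since $\omega_{\sdp1}(G)=1=\sum_{s,t}\pi(s,t)$ is a convex combination of the quantities $c_{st}:=\sum_{(a,b)\in\sigma_{st}}\langle u_a^s,v_b^t\rangle\le\sum_{a,b}\langle u_a^s,v_b^t\rangle=\|z\|^2=1$, each $c_{st}$ must equal $1$ whenever $\pi(s,t)>0$. Writing $\alpha_{ij}=\langle u_i^s,v_j^t\rangle$, $x_i=\|u_i^s\|^2$, $y_j=\|v_j^t\|^2$, $\sigma=\sigma_{st}$, the SDP constraints force $\alpha_{ij}\ge 0$, $\alpha_{ij}=0$ for $(i,j)\notin\sigma$, $\sum_j\alpha_{ij}=\langle u_i^s,z\rangle=x_i$, $\sum_i\alpha_{ij}=y_j$, and $\sum_i x_i=\sum_j y_j=1$. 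By Corollary~\ref{cor:goodanswers} the rounding wins on $(s,t)$ with probability at least $p_V/(2-p_V)$, where $p_V=\sum_{(i,j)\in\sigma}\alpha_{ij}^2/\max(x_i,y_j)$, so it suffices to show $p_V\ge 1/(10(d-1))$.

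\textbf{DAG via Claim~\ref{clm:acycligraph}.} The plan is to apply Claim~\ref{clm:acycligraph}: build a directed graph on the $2k$ vertices $\{u_i\}\cup\{v_j\}$ with weights $w(u_i)=x_i$, $w(v_j)=y_j$, orienting $u_i\to v_j$ for $(i,j)\in\sigma$ whenever $y_j\ge 2(d-1)\,x_i$, and symmetrically for $v_j\to u_i$. Every edge strictly increases weight, so the graph is acyclic and the non-sink hypothesis of Claim~\ref{clm:acycligraph} with $D=d-1$ is immediate from the threshold. A pigeonhole argument then bounds the indegree by $d-1$: if all $d$ of $N(j)$ had edges into $v_j$, each $x_i\le y_j/(2(d-1))$ would give $y_j=\sum_{i\in N(j)}\alpha_{ij}\le\sum_{i\in N(j)}x_i\le d\,y_j/(2(d-1))$, impossible for $d\ge 3$ (the boundary case $d=2$ is handled by perturbing the threshold to $2+\epsilon$). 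The claim then guarantees that the sinks carry total weight at least $(\sum x_i+\sum y_j)/2=1$, so either $a:=\sum_{u_i\text{ sink}}x_i\ge 1/2$ or $b:=\sum_{v_j\text{ sink}}y_j\ge 1/2$.

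\textbf{Two-step Cauchy--Schwarz.} For every sink $u_i$, each $j\in N(i)$ satisfies $\max(x_i,y_j)\le 2(d-1)\,x_i$, so Cauchy--Schwarz (in the form $\sum_k a_k^2/b_k\ge(\sum_k a_k)^2/\sum_k b_k$) yields
$$\sum_{j\in N(i)}\frac{\alpha_{ij}^2}{\max(x_i,y_j)}\ge\frac{x_i^2}{\sum_{j\in N(i)}\max(x_i,y_j)}\ge\frac{x_i^2}{dx_i+Y_i},\qquad Y_i:=\sum_{j\in N(i)}y_j.$$
Summing over sink $u_i$'s and applying Cauchy--Schwarz once more, using the global budget $\sum_i Y_i\le d\sum_j y_j=d$, gives $\sum_{u_i\text{ sink}}x_i^2/(dx_i+Y_i)\ge a^2/(da+d)=a^2/(d(a+1))$. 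Evaluating at $a\ge 1/2$ (or at $b$ by symmetry), this produces $p_V\ge 1/(6d)$ and hence $\omega^*(G)\ge 1/(12d-1)\ge 1/(20(d-1))$ for $d\ge 3$. The case $d=2$ follows directly from the one-line bound $p_V\ge(\sum_\sigma\alpha_{ij})^2/\sum_\sigma(x_i+y_j)\ge 1/(2d)=1/4$ (using $|N(i)|,|N(j)|\le d$), which gives $\omega^*(G)\ge 1/7$.

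\textbf{Main obstacle.} The delicate point is the coordination of the two Cauchy--Schwarz applications: a purely local per-sink Cauchy--Schwarz alone loses an extra factor of $d$ (yielding only $p_V=\Omega(1/d^2)$). The DAG of Claim~\ref{clm:acycligraph} is what makes $\max(x_i,y_j)\le 2(d-1)x_i$ hold throughout a set of weight $\ge 1$, and the global budget $\sum_iY_i\le d$ is what makes the outer Cauchy--Schwarz sum telescope correctly, together restoring the correct $\Theta(1/d)$ scaling needed for the $1/(20(d-1))$ bound.
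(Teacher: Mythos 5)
Your proof is correct, and it does differ from the paper's in a genuine way. The paper builds an acyclic graph directly on the set $V$ of allowed pairs $(i,j)$ (with weights $a_{ij}$), defines $V'$ as those pairs where $\max(\sum_{i'}a_{i'j},\sum_{j'}a_{ij'})\le 5(d-1)a_{ij}$, and uses Claim~\ref{clm:acycligraph} with $D=2(d-1)$ to show $\sum_{V'}a_{ij}\ge\tfrac12$, giving $p_V\ge\tfrac{1}{10(d-1)}$ by a pure threshold argument with no Cauchy--Schwarz. You instead build a bipartite acyclic graph on the $2k$ vectors $\{u_i\}\cup\{v_j\}$ with weights $x_i,y_j$, apply the Claim with $D=d-1$, and then combine two applications of Cauchy--Schwarz with the budgets $\sum x_i=\sum y_j=1$ and $\sum_i Y_i\le d$. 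Both routes are valid; yours gives a slightly better constant ($p_V\ge\tfrac{1}{6d}$ versus $\tfrac{1}{10(d-1)}$).

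However, there is a more important observation worth flagging. The ``one-line bound'' you invoke only for $d=2$,
\[
p_V \;=\; \sum_{(i,j)\in V}\frac{\alpha_{ij}^2}{\max(x_i,y_j)}
\;\ge\; \frac{\bigl(\sum_{(i,j)\in V}\alpha_{ij}\bigr)^2}{\sum_{(i,j)\in V}\max(x_i,y_j)}
\;\ge\; \frac{1}{\sum_{(i,j)\in V}(x_i+y_j)}
\;\ge\; \frac{1}{2d},
\]
in fact holds for \emph{every} $d\ge 2$, since $\sum_{(i,j)\in V}x_i\le d\sum_i x_i = d$ and similarly for $y_j$ by the $d$-to-$d$ property. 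This already gives $\omega^*(G)\ge \tfrac{p_V}{2-p_V}\ge \tfrac{1}{4d-1}$, and one checks that $\tfrac{1}{4d-1}\ge\tfrac{1}{20(d-1)}$ holds for all $d\ge 2$ (equivalently $16d\ge 19$). So the entire DAG machinery --- both in your version and in the paper's --- is unnecessary for the stated bound, and the single global Cauchy--Schwarz actually yields a \emph{stronger} conclusion than Theorem~\ref{thm:ddgames}. Your diagnosis in the ``Main obstacle'' paragraph, that a direct Cauchy--Schwarz loses a factor of $d$, comes from applying Cauchy--Schwarz per sink and then \emph{restricting} the outer sum; applied globally to all of $V$ (where $a=\sum x_i=1$ and $\sum_i Y_i\le d$) the same two-step computation collapses exactly to the one-line bound above with no loss. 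It would be worth stating the argument this way and retaining the DAG construction only if a sharper, $d$-independent bound were the goal.

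Two small technical points in your write-up. First, the pigeonhole for the in-degree bound only yields a contradiction for $d\ge 3$, and your fix of ``perturbing the threshold to $2+\epsilon$'' would change the out-neighbor-weight hypothesis in Claim~\ref{clm:acycligraph} (which requires a factor $2D$), so it needs a line of care; as written you sidestep this by treating $d=2$ separately, which is fine. Second, zero-weight vectors can create weight-tied cycles; one should either discard them at the outset (they contribute nothing to $p_V$) or note that edges are only placed between vertices of strictly positive weight. Neither affects the correctness of the result.
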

\begin{proof}
Fix a solution $\{u_a^s\}$, $\{v_b^t\}$, $z$ to SDP~1 with value $1$ and consider the
strategy of Alice and Bob given by Algorithm~\ref{alg:1}. Our goal is to show that this strategy has success
probability at least $\frac{1}{20(d-1)}$. Clearly, it suffices to show this for any
fixed questions $s,t$. So from now on fix a pair of questions $s,t$, and let $a_{ij} = \ipb{u_i^s, v_j^t}$.
Let $V \subseteq \{1,\ldots,k\}^2$ be the set of allowed answers from the provers, and notice
that $\sum_{i,j \in V} a_{ij} = 1$ and $a_{ij}$ is zero for all $(i,j) \notin V$.
By Corollary~\ref{cor:goodanswers}, the success probability is at least $p_V/2$ where
$$ p_V = \sum_{i,j \in V} a_{ij} \, \frac{a_{ij}}{\max(\sum_{i'} a_{i'j}, \sum_{j'} a_{ij'} )}.$$
Let $V' \subseteq V$ be the set of all pairs $(i,j) \in V$ for which
$\max(\sum_{i'} a_{i'j}, \sum_{j'} a_{ij'}) \le 5(d-1) a_{ij}$. Clearly, $p_V \ge \frac{1}{5(d-1)} \sum_{i,j \in V'} a_{ij}$,
and hence it suffices to lower bound $\sum_{i,j \in V'} a_{ij}$ by $\frac{1}{2}$.

Consider the directed graph on vertex set $V$ defined as follows. We assign the weight $a_{ij}$
to each vertex $(i,j) \in V$ so that the total weight of vertices is $1$.
Let $(i,j)$ be some vertex $V\setminus V'$. If $\sum_{i'} a_{i'j} > 5(d-1) a_{ij}$,
then we put an edge from $(i,j)$
to $(i',j)$ for all $i'$ such that $a_{i'j} > a_{ij}$.
Otherwise, it must be the case
that $\sum_{j'} a_{ij'} > 5(d-1) a_{ij}$, and we proceed similarly,
placing an edge from $(i,j)$ to $(i,j')$ for all $j'$ such that $a_{ij'} > a_{ij}$.

The graph obtained is clearly acyclic. Moreover, the sum of weights of the out-neighbors
of each vertex $(i,j) \in V\setminus V'$ is at least $5(d-1) a_{ij} - (d-1)a_{ij} = 4(d-1)a_{ij}$,
the worst case being when $d-1$ elements in the sum $\sum_{j'} a_{ij'}$
(or $\sum_{i'} a_{i'j}$) are equal to $a_{ij}$. Also, the vertices in $V'$ are exactly
those with outdegree zero, and all indegrees are at most $2(d-1)$. We can therefore
apply Claim~\ref{clm:acycligraph} with $D=2(d-1)$ to obtain that the weight of vertices
in $V'$ is at least $\frac{1}{2}$. This implies that $p_V \ge \frac{1}{10(d-1)}$, as required.
\end{proof}

\section{Parallel Repetition}\label{sec:rep}

In this section we prove our parallel repetition results for unique entangled games.
Given two games $G_1=G(\pi_1,V_1)$ with questions $Q_1$ and answers in $[k_1]$ and $G_2=G(\pi_2,V_2)$ with
questions $Q_2$ and answers in $[k_2]$, we define the product $G_1 \times G_2$ to be a game with questions
$Q_1 \times Q_2$ and answers in $[k_1] \times [k_2]$. The questions are sampled according to the product
distribution $\pi_1 \times \pi_2$. The predicate is the product of  $V_1$ and $V_2$, i.e., the answers are
accepted iff the provers would win each game separately. We denote the $m$-fold product of $G$ with itself by
$G^{  m}$.

The parallel repetition question asks to describe the behavior of the value of $G^m$.
Let us first consider the non-entangled case, which has been investigated
extensively. One easy observation is that $\omega(G^m)$ is lower bounded by $\omega(G)^m$,
as the provers can play each instance of the game independently, using an optimal strategy.
Parallel repetition theorems attempt to provide upper bounds on $\omega(G^m)$.
For general games, the best known result
is Holenstein's tightened version~\cite{Holenstein07} of Raz's parallel repetition theorem~\cite{Raz98}. For
the special case of unique games the best known result is the recent one by Rao
\cite{Rao:parallel}.\footnote{In fact, this result holds for the more general case of \emph{projection games} (see \cite{Rao:parallel} for the definition).}
\begin{theorem}[\cite{Rao:parallel}]\label{Rao:theorem}
Let $G$ be a unique game with value $\omega(G)=1-\eps$. Then for all $m \ge 1$, $(1-\eps)^m \leq \omega(G^{  m})
\leq (1-c \eps^2)^{m}$ where $c>0$ is a universal constant.
\end{theorem}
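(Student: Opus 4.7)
The lower bound $(1-\eps)^m \le \omega(G^{m})$ is immediate: the provers play each of the $m$ copies independently using an optimal single-game strategy, so the winning events are independent and the joint success probability is exactly $(1-\eps)^m$.

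For the upper bound, the plan is to follow the information-theoretic framework of Raz, as refined by Holenstein and further tightened by Rao for projection games. Suppose for contradiction that there is a strategy $S$ for $G^{m}$ winning with probability $p > (1-c\eps^2)^m$, for a sufficiently small absolute constant $c$. Let $W_i$ denote the event that game $i$ is won under $S$ and set $W = \bigwedge_{i=1}^m W_i$, so $\Pr[W]=p$. The goal is to turn $S$ into a single-game strategy for $G$ with value $> 1-\eps$, contradicting $\omega(G) = 1-\eps$.

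The core argument is to pick a uniformly random subset $T \subseteq [m]$ of moderate size (e.g.\ $\Theta(\eps m)$) and condition on $W_T := \bigwedge_{i\in T} W_i$. Using the chain rule for relative entropy, one shows that on average over $i \notin T$, the joint distribution of questions and answers at coordinate $i$ after conditioning on $W_T$ differs from the product marginal $\pi$ in KL divergence by at most $O(-\log p/m) = O(c\eps^2)$. A key device, due to Holenstein, is a ``dependency-breaking'' random variable $\Omega$ that Alice and Bob can sample jointly from shared randomness without communicating; conditioning on $\Omega$ makes the conditional strategy factor across coordinates, so each prover can locally simulate what the joint strategy would do at a single coordinate $i$.

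The final step, and the main obstacle, is the correlated-sampling lemma. To turn the information-theoretic closeness into an honest single-game strategy, Alice and Bob must (using only shared randomness and their own question) sample their answer at a chosen coordinate $i \notin T$ from a distribution close in total variation to the one that $S$ would produce conditioned on $W_T$ and on their single question. Executed generically, the correlated-sampling step loses a factor of $\log|\Sigma|$ in the exponent of the final bound, which is what limits Raz/Holenstein-type proofs for general games. Rao's insight is that for projection (and hence unique) games, a winning answer $b = \sigma(a)$ is a deterministic function of $a$, so the two provers effectively need to agree on only a single piece of information rather than coordinate two independent answers; a tighter correlated-sampling procedure tailored to this setting removes the alphabet-size factor and delivers the clean $\eps^2$ exponent. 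Putting the ingredients together produces the claimed single-game strategy with success probability $> 1-\eps$, giving the contradiction and hence the bound $\omega(G^{m}) \le (1-c\eps^2)^m$.
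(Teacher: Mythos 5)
This theorem is not proved in the paper at all: it is quoted directly from Rao~\cite{Rao:parallel} and used as a black box (in fact it plays no role in the paper's own arguments, which concern the \emph{entangled} value $\omega^*$ and go through the bipartite SDP machinery of Theorem~\ref{thm:MS}, not through information-theoretic parallel repetition). So there is no ``paper's own proof'' to compare against.

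Taken on its own terms, your sketch is a fair high-level account of how Rao's proof actually goes: the easy independent-play lower bound, the Raz/Holenstein conditioning on a random subset $T$ of winning coordinates, the dependency-breaking variable $\Omega$ to make the conditional distribution samplable by the two provers without communication, and Rao's observation that in a projection (hence unique) game the second prover's winning answer is determined by the first prover's, which removes the $\log|\Sigma|$ loss and yields the clean $\eps^2$ in the exponent. As an actual proof it is of course far from complete --- the chain-rule divergence bound, the construction and analysis of $\Omega$, the correlated-sampling lemma and its projection-game refinement, and the final union/averaging step are all asserted rather than carried out --- but as a roadmap it is accurate. Just be aware that reproducing these steps rigorously is a substantial undertaking (it is the content of an entire STOC paper), and that nothing in the present paper relies on or re-derives it.
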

Somewhat surprisingly, the square in the upper bound is not an artifact of the
proof, but is in fact necessary. There are examples of unique games $G$ (and even XOR games)
for which the upper bound is essentially tight,
namely, $\omega(G)=1-\eps$ but $\omega(G^m)$
is at least $(1-c' \eps^2)^{m}$ for large $m$ and some universal constant $c'>0$~\cite{Raz08,BarakHHRRS08}.

In the entangled case, much less is known.
Clearly, the lower bound $\omega^*(G^m) \ge \omega^*(G)^m$ holds
for the same reason as before.
The only previously known parallel repetition theorem is due to Cleve et al.~\cite{CleveSUU07}, who
prove that the entangled value of XOR games behaves \emph{perfectly} under parallel repetition, meaning
that $\omega^*(G^{ m})=\omega^*(G)^m$. This is in contrast to the behavior of the non-entangled
value of XOR games mentioned above.

\subsection{Our parallel repetition results}

We show the following parallel repetition theorems for unique entangled games.

\begin{theorem}[Parallel repetition for unique games]\label{thm:parallelunique}
Let $G$ be a unique game with entangled value $\omega^*(G)=1-\eps$. Then $(1-\eps)^m \leq \omega^*(G^{  m})
\leq (1-\frac{\eps^2}{16})^m$.
\end{theorem}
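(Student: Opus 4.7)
The lower bound $(1-\eps)^m \le \omega^*(G^{m})$ is immediate: the provers play each of the $m$ coordinates independently with an optimal single-copy strategy. The interesting direction is the upper bound, and my plan follows the Feige--Lov\'asz template (as used by Cleve et al.\ for XOR games): introduce a bipartite SDP relaxation $\omega_{\mathrm{bip}}$ of the entangled value that (i) upper bounds $\omega^*$, (ii) is exactly multiplicative under the product of games, and (iii) still admits a quantum rounding (with worse quantitative behaviour than Theorem~\ref{thm:uniqgames}) for unique games.

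For (i) and (ii), I would introduce a bipartite variant of SDP~\ref{sdp:1} obtained by weakening the constraints that couple Alice's and Bob's sides---for instance, replacing the shared unit vector $z$ with a pair $z_A, z_B$ and, if necessary, relaxing the non-negativity of $\ip{u_a^s, v_b^t}$---so that the program falls into the class handled by Mittal--Szegedy's SDP-tensor theorem \cite{MittalSzegedy:SDPtensor}. Part~(i) is essentially a repeat of Lemma~\ref{lemma:2}: any entangled strategy yields a feasible bipartite SDP solution of the same value. The easy half of (ii), $\omega_{\mathrm{bip}}(G_1 \times G_2) \ge \omega_{\mathrm{bip}}(G_1)\,\omega_{\mathrm{bip}}(G_2)$, follows by tensoring feasible solutions, since orthogonality and the norm/support constraints are all preserved under $\otimes$; the hard half of multiplicativity is precisely what Mittal--Szegedy provides.

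The heart of the argument is (iii): I aim to prove that for any unique game $G$,
$$\omega^*(G)\;\ge\;1-4\sqrt{1-\omega_{\mathrm{bip}}(G)}.$$
The strategy is again to feed the SDP vectors into Algorithm~\ref{alg:1} and to track the success probability through Corollary~\ref{cor:goodanswers}. The new difficulty is that, without the full constraint set of SDP~\ref{sdp:1}, the proof of Theorem~\ref{thm:uniqgames}---which crucially uses $\sum_a u_a^s = \sum_b v_b^t = z$ to convert inner products into norms---no longer goes through verbatim. A Cauchy--Schwarz estimate replaces this step, and it is precisely this estimate that converts the linear $O(\delta)$ loss of Theorem~\ref{thm:uniqgames} into an $O(\sqrt{\delta})$ loss (with constant $4$) for the bipartite SDP. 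I expect this rounding to be the main obstacle of the proof.

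Once (i)--(iii) are in hand, the theorem falls out. Writing $\delta = 1-\omega_{\mathrm{bip}}(G)$, part~(iii) rearranges to $1-\omega^*(G)=\eps \le 4\sqrt{\delta}$, so $\omega_{\mathrm{bip}}(G)\le 1-\eps^2/16$. Chaining (i) and (ii),
$$\omega^*(G^{m})\;\le\;\omega_{\mathrm{bip}}(G^{m})\;=\;\omega_{\mathrm{bip}}(G)^{m}\;\le\;\bigl(1-\tfrac{\eps^2}{16}\bigr)^{m},$$
which, together with the trivial lower bound, is the desired statement.
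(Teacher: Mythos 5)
Your proposal follows essentially the same route as the paper's proof. The bipartite SDP you describe matches the paper's SDP~\ref{sdp:3} exactly: drop the coupling constraint $\sum_a u_a^s = \sum_b v_b^t = z$ and drop the non-negativity constraint, keeping only per-prover orthogonality and unit total norm. The rounding step you identify as the main obstacle is the paper's Lemma~\ref{lemma:uniquemodified}, which proves $\omega^*(G) \geq 1 - 2\sqrt{2\delta} - 4\delta$ with $\delta = 1-\omega_{\sdp\text{\ref{sdp:3}}}(G)$, via a Cauchy--Schwarz bound on $p''_{\mathrm{succ}} - p'_{\mathrm{succ}}$ in place of the argument of Theorem~\ref{thm:uniqgames}; this is close to your target bound $1-4\sqrt{\delta}$ and both yield the same $\eps^2/16$ factor. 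One small imprecision worth flagging: Mittal--Szegedy gives multiplicativity of the \emph{bipartite product} $S \bprod S'$, but SDP~\ref{sdp:3} of the product game $G\times G'$ is not literally equal to $\text{SDP\ref{sdp:3}}(G)\bprod \text{SDP\ref{sdp:3}}(G')$---the former imposes orthogonality for all $(a,a')\neq(b,b')$, while the bipartite product only imposes the tensor combinations. The paper resolves this by observing that the bipartite product is a \emph{relaxation} of the SDP of the product game, which gives the needed inequality $\omega_{\sdp\text{\ref{sdp:3}}}(G^m) \le \omega_{\sdp\text{\ref{sdp:3}}}(G)^m$; combined with your tensoring lower bound they are in fact equal, so your equality sign is ultimately correct, but the argument needs this extra observation.
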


In the case of \emph{uniform} unique entangled games we obtain a much stronger
bound (assuming the product game is also uniform).

\begin{theorem}[Parallel repetition for uniform unique games]\label{thm:paralleluniform}
Let $G$ be a  uniform unique game with entangled value $\omega^*(G)=1-\eps$ and such that $G^{  m}$ is also
uniform. Then $(1-\eps)^m \leq \omega^*(G^{  m}) \leq (1-\frac{\eps}{4})^m$.
\end{theorem}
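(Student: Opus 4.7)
The lower bound $(1-\eps)^m \le \omega^*(G^m)$ is immediate: the provers play each of the $m$ copies of $G$ independently using an optimal entangled strategy, succeeding in the product game whenever they succeed in every coordinate. For the upper bound, the plan is to chain three inequalities through SDP~2:
\begin{align*}
\omega^*(G^m) \;\le\; \omega_{\sdp 2}(G^m) \;\le\; \omega_{\sdp 2}(G)^m \;\le\; \bigl(1-\tfrac{\eps}{4}\bigr)^m.
\end{align*}
The first inequality uses that $G^m$ is uniform (by hypothesis), so by the discussion at the end of Section~\ref{sec:relaxation} SDP~2 is a valid relaxation of $\omega^*(G^m)$. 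The third inequality is the contrapositive of Theorem~\ref{thm:lingames}: writing $\omega_{\sdp 2}(G) = 1-\delta$, Theorem~\ref{thm:lingames} yields $\omega^*(G) \ge 1-4\delta$, so from $\omega^*(G)=1-\eps$ we get $\delta \ge \eps/4$, whence $\omega_{\sdp 2}(G) \le 1-\eps/4$.

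The crux is therefore the middle inequality, the multiplicativity statement $\omega_{\sdp 2}(G^m) \le \omega_{\sdp 2}(G)^m$. The plan is to rewrite SDP~2 in a canonical semidefinite form and then invoke the SDP tensor-product theorem of Mittal and Szegedy~\cite{MittalSzegedy:SDPtensor} (the same strategy used by Cleve et al.~\cite{CleveSUU07} for XOR games, going back in spirit to Feige and Lov\'asz~\cite{FL}). Concretely, I would introduce the Gram matrix $X$ of the variables $\{u_a^s\},\{v_b^t\},z$ and encode all of SDP~2 as $\max \operatorname{Tr}(CX)$ subject to $X \succeq 0$ and a list of linear constraints on entries of $X$ (the norm conditions $X_{(a,s),(a,s)}=X_{(b,t),(b,t)}=1/k$, the orthogonality conditions $X_{(a,s),(a',s)}=0$ for $a\ne a'$, the sum conditions $\sum_a X_{(a,s),\,\cdot}=X_{z,\,\cdot}$, the normalization $X_{z,z}=1$, and the inequalities $X_{(a,s),(b,t)}\ge 0$); here $C$ is the coefficient matrix whose only nonzero entries are $\pi(s,t)V(a,b\mid s,t)$ on the $((a,s),(b,t))$-block. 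For the product game $G^m$, both $C$ and the constraint system have the tensor-product structure required by Mittal--Szegedy: the cost matrix factors as $C^{\otimes m}$ (since $\pi$ and $V$ factor), and the feasible region is built from the $m$-fold tensor of the feasible regions of the single-copy SDP. This is precisely the setting in which their theorem yields that the maximum is multiplicative.

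The main obstacle will be verifying that SDP~2 fits the Mittal--Szegedy framework cleanly, in two respects. First, the equality $\sum_a u_a^s=z$ is not literally a scalar SDP constraint; one handles this as in Remark~\ref{remark:1} by replacing it with the equivalent scalar identities $\sum_{a,b}\ip{u_a^s,v_b^t}=1$, which do decompose multiplicatively under tensor products. Second, the linear inequalities $\ip{u_a^s,v_b^t}\ge 0$ must be preserved under the decomposition of a product-SDP solution into single-copy solutions; the ``$\ge$'' direction of multiplicativity (that tensoring feasible solutions gives a feasible product solution of the product value) is immediate, and the nontrivial ``$\le$'' direction is exactly what the Mittal--Szegedy machinery supplies, provided the inequality constraints are phrased as membership in a cone closed under tensor products (the nonnegative orthant, which is). Once these bookkeeping issues are dispatched, combining the three displayed inequalities yields $\omega^*(G^m) \le (1-\eps/4)^m$ as required.
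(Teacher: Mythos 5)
Your overall skeleton is right, and the first and third inequalities in your chain are exactly as in the paper. The genuine gap is the middle step: you try to apply the Mittal--Szegedy multiplicativity machinery directly to SDP~2, but SDP~2 is not of the form that machinery handles. The bipartite SDP framework set up in Section~\ref{sec:rep} (and Theorem~\ref{thm:MS}, which is the form of Mittal--Szegedy actually used) requires that every constraint be an \emph{equality} involving only $u$-variables or only $v$-variables. SDP~2 violates this twice over: the constraint $\ip{u_a^s,v_b^t}\ge 0$ is an inequality and mixes $u$ and $v$, and your proposed replacement of $\sum_a u_a^s=z$ by $\sum_{a,b}\ip{u_a^s,v_b^t}=1$ is likewise a mixed constraint. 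Your remark about the nonnegative orthant being closed under tensor products does not rescue this: the multiplicativity proof runs through SDP duality and needs the dual matrix to be block-diagonal across the $u$ and $v$ blocks, which is precisely what mixed constraints destroy.

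The paper's resolution is the idea you are missing. Instead of multiplying SDP~2, one relaxes further to a genuinely bipartite SDP (SDP~\ref{sdp:4}), which retains \emph{only} the orthogonality constraints $\ip{u_a^s,u_b^s}=\frac{1}{k}\delta_{a,b}$ and $\ip{v_a^t,v_b^t}=\frac{1}{k}\delta_{a,b}$, dropping the $z$-constraints and the nonnegativity constraints entirely. The crucial observation (Remark~\ref{remark:2}) is that the rounding analysis in Theorem~\ref{thm:lingames} only ever used the constraints that survive into SDP~\ref{sdp:4}; hence the same argument shows $\omega_{\sdp\text{\ref{sdp:4}}}(G)\le 1-\eps/4$ whenever $\omega^*(G)=1-\eps$. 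SDP~\ref{sdp:4} is strictly feasible and bipartite, and one checks directly that $\text{SDP\ref{sdp:4}}(G)\bprod\text{SDP\ref{sdp:4}}(G')$ coincides with $\text{SDP\ref{sdp:4}}(G\times G')$, so Theorem~\ref{thm:MS} gives $\omega_{\sdp\text{\ref{sdp:4}}}(G^m)=\omega_{\sdp\text{\ref{sdp:4}}}(G)^m$. Chaining $\omega^*(G^m)\le\omega_{\sdp\text{\ref{sdp:4}}}(G^m)$ (using uniformity of $G^m$) finishes the argument. In short: don't try to multiply the SDP you rounded from; pass to its bipartite relaxation and verify your rounding never needed the discarded constraints.
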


Recall that linear games (cf. Definition~\ref{def:linear}) are uniform.
Moreover, notice that the $m$-th power of a linear game is also linear
(with the group $H^m$). Therefore we obtain the following corollary
of Theorem \ref{thm:paralleluniform}.

 \begin{corollary}
Let $G$ be a  linear game with $\omega^*(G)=1-\eps$. Then $(1-\eps)^m \leq \omega^*(G^{  m}) \leq
(1-\frac{\eps}{4})^m$.
 \end{corollary}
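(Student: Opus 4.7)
The plan is to deduce the corollary as a direct application of Theorem \ref{thm:paralleluniform} to the special case of linear games. The lower bound $(1-\eps)^m \le \omega^*(G^m)$ requires no work: Alice and Bob can simply play the $m$ instances of $G$ independently, using an optimal entangled strategy in each one (with independent shared entanglement across copies), achieving success probability $\omega^*(G)^m = (1-\eps)^m$. So the only content is the upper bound.

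To invoke Theorem \ref{thm:paralleluniform}, I need to verify two hypotheses: that $G$ is a uniform unique game, and that $G^m$ is also uniform. The first follows from remarks already in the paper: by Definition \ref{def:linear}, $G$ is linear with some Abelian group $H$ of size $k$ and function $W:Q\times Q\to H$, hence in particular unique (take $\sigma_{st}(a) = a - W(s,t)$), and hence uniform by the observation following Definition \ref{definition:3} (the provers use shared randomness to pick $c\in H$ uniformly and add it to their answers).

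For the second hypothesis, I would check that $G^m$ is itself a linear game, which then makes it uniform by the same argument. Questions in $G^m$ are tuples $\mathbf s=(s_1,\ldots,s_m)$ and $\mathbf t=(t_1,\ldots,t_m)$, answers lie in $[k]^m$, which we identify with the Abelian group $H^m$ of size $k^m$. Define $W^m:Q^m\times Q^m\to H^m$ by $W^m(\mathbf s,\mathbf t) = (W(s_1,t_1),\ldots,W(s_m,t_m))$. Then the product predicate accepts $(\mathbf a,\mathbf b)$ iff $a_i - b_i = W(s_i,t_i)$ in $H$ for every $i$, which is exactly $\mathbf a - \mathbf b = W^m(\mathbf s,\mathbf t)$ in $H^m$. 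Thus $G^m$ fits Definition \ref{def:linear}, is therefore uniform, and Theorem \ref{thm:paralleluniform} applies to yield $\omega^*(G^m) \le (1-\eps/4)^m$.

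There is essentially no obstacle here; the only thing worth double-checking is that the construction of a uniform strategy for $G^m$ from an arbitrary optimal strategy (by symmetrizing with a random group element) goes through coordinate-wise, which it does because $H^m$ is Abelian and the product predicate depends only on $\mathbf a - \mathbf b$. Combining both bounds gives $(1-\eps)^m \le \omega^*(G^m) \le (1-\eps/4)^m$, completing the proof.
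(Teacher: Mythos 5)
Your proof is correct and follows exactly the paper's route: observe that linear games are uniform, that $G^m$ is again linear (with group $H^m$) hence uniform, and then invoke Theorem~\ref{thm:paralleluniform}. You have merely spelled out the verification that $G^m$ satisfies Definition~\ref{def:linear}, which the paper leaves implicit.
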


\subsection{Bipartite SDPs and multiplicativity}

To prove our upper bounds on $\omega^*(G^{m})$, we study an SDP
relaxation and show that its value has a certain multiplicative property, as explained below.
The SDPs that we consider here are of a form that we call {\em bipartite} SDP.
These SDPs have two sets of variables, $u_1,\ldots,u_{n_1}$ and $v_1,\ldots,v_{n_2}$;
the goal function only involves inner products between $u$ variables and $v$ variables;
and the constraints are all equality constraints and involve either only $u$
variables or only $v$ variables. More specifically, the bipartite SDP specified
by the $n_1 \times n_2$ matrix $J$, the $n_1 \times n_1$ symmetric matrices $A^1,\ldots,A^{L_1}$,
the $n_2 \times n_2$ symmetric matrices $B^1,\ldots,B^{L_2}$, and the real numbers $a_1,\ldots,a_{L_1}$,
$b_1,\ldots,b_{L_2}$ is given by
\begin{align}
\textbf{Maximize: \qquad} & \textstyle \sum_{i=1,j=1}^{n_1,n_2} J_{ij}\, \ip{u_i, v_j} \nonumber \\
\textbf{Subject to: \qquad} & \textstyle \sum_{i,j=1}^{n_1} A^{l}_{ij} \ip{u_i, u_j}=a_{l} \text{~for~} l=1,\ldots,L_1 \label{eq:bipartitesdp} \\
   & \textstyle \sum_{i,j=1}^{n_2} B^{l}_{ij} \ip{v_i, v_j} = b_{l} \text{~for~} l=1,\ldots,L_2. \nonumber
\end{align}

We now define the \emph{bipartite product} $S \bprod S'$ of two bipartite SDPs $S$ and $S'$.
Assume $S$ has $n_1+n_2$ variables and $L_1+L_2$ constraints, and
is specified by $J,A^{l},B^{l},a_l$ and $b_l$, and similarly for $S'$.
Then $S \bprod S'$ is the bipartite SDP over $n_1 n'_1 + n_2 n'_2$ variables
and $L_1 L'_1 + L_2 L'_2$ constraints given by $J \otimes J'$, the matrices $A^{l} \otimes A'^{l'}$
and $B^{l} \otimes B'^{l'}$, and the numbers $a_l a'_{l'}$ and $b_l b'_{l'}$.
More specifically, the product SDP $S \bprod S'$ is given by
\begin{align}
\textbf{Maximize:\qquad}
   & \textstyle \sum_{i=1,j=1,i'=1,j'=1}^{n_1,n_2,n'_1,n'_2} J_{ij} J'_{i'j'}\, \ip{u_{ii'}, v_{jj'}} \nonumber \\
\textbf{Subject to:\qquad}
   & \textstyle \sum_{i,j=1,i',j'=1}^{n_1,n'_1} A^{l}_{ij} A'^{l'}_{i'j'} \ip{u_{ii'}, u_{jj'}}=a_l a'_{l'} \text{~for~} l=1,\ldots,L_1,~l'=1,\ldots,L'_1 \label{eq:bipartiteproductsdp} \\
   & \textstyle \sum_{i,j=1,i',j'=1}^{n_2,n'_2} B^{l}_{ij} B'^{l'}_{i'j'}  \ip{v_{ii'}, v_{jj'}} = b_{l} b'_{l'} \text{~for~} l=1,\ldots,L_2,~l'=1,\ldots,L'_2. \nonumber
\end{align}

{}From the construction of the bipartite product it is obvious that, given any feasible solution $\{u_i\},\{v_j\}$
of $S$ and any feasible solution $\{u'_{i'}\},\{v'_{j'}\}$ of $S'$,
we can construct a feasible solution $\{u_i\otimes u'_{i'}\},\{v_j\otimes
v'_{j'}\}$ of $S \bprod S'$ whose value is the product of the two values.
By taking optimal solutions, this immediately implies that $\omega_{S} \omega_{S'} \leq
\omega_{S \bprod S'}$, where $\omega$ indicates the value of the SDP.
However, it is not a priori clear that equality holds here.
It could well be the case that $S \bprod S'$ has solutions that do not have a tensor product
structure and give a higher value. Luckily, we now show that equality \emph{does}
hold under a very mild condition on the SDPs, namely, that they are strictly feasible, which
in our case simply means that there is a feasible solution in which all vectors
have positive length.

\begin{theorem}\label{thm:MS}
For any two strictly feasible bipartite SDPs $S$ and $S'$, $\omega_{S \bprod S'} = \omega_{S} \omega_{S'}$.
\end{theorem}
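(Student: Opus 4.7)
The plan is to establish the nontrivial inequality $\omega_{S\bprod S'}\le \omega_S\omega_{S'}$ via SDP duality — the reverse inequality has already been noted before the theorem by tensoring primal solutions. My goal is to exhibit a dual feasible solution of $S\bprod S'$ of value exactly $\omega_S\omega_{S'}$ built, up to a scaling factor, as a tensor product of optimal dual solutions of $S$ and $S'$; weak duality will then close the argument.

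First I would write down the standard SDP dual of (\ref{eq:bipartitesdp}). Viewing the primal as a maximization over the block Gram matrix $X=\left(\begin{smallmatrix}U & W\\ W^T & V\end{smallmatrix}\right)\succeq 0$ and rewriting the objective as $\tfrac12\mathrm{tr}\bigl(\bigl(\begin{smallmatrix}0 & J\\ J^T & 0\end{smallmatrix}\bigr)X\bigr)$, the dual takes the form
\[
\min\ \textstyle\sum_l\lambda_l a_l+\sum_l\mu_l b_l\quad\text{subject to}\quad M(\lambda,\mu):=\begin{pmatrix} P & -J/2\\ -J^T/2 & Q\end{pmatrix}\succeq 0,
\]
where $P=\sum_l\lambda_l A^l$ and $Q=\sum_l\mu_l B^l$. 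Strict feasibility of $S$ activates Slater's condition, so strong duality gives an optimal $(\lambda^*,\mu^*)$ with value $\omega_S$; set $A^*=\sum\lambda^*_l a_l$ and $B^*=\sum\mu^*_l b_l$. I would next exploit the scaling symmetry $(\lambda,\mu)\mapsto(\alpha\lambda,\mu/\alpha)$, $\alpha>0$, which preserves $M\succeq 0$ (conjugate by $\mathrm{diag}(\alpha^{-1/2}I,\alpha^{1/2}I)$) and sends the objective to $\alpha A+B/\alpha$. Using this, one rules out $A^*<0$ or $B^*<0$ (either would drive the dual unbounded below, contradicting primal boundedness) and then rescales to arrange $A^*=B^*=\omega_S/2$; the trivial case $\omega_S=0$ is handled separately. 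Perform the same normalization for $S'$, producing $(\lambda'^*,\mu'^*)$ with $A'^*=B'^*=\omega_{S'}/2$.

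The proposed dual candidate for $S\bprod S'$ is then
\[
\widetilde\lambda_{l,l'}:=2\lambda^*_l\lambda'^*_{l'},\qquad \widetilde\mu_{l,l'}:=2\mu^*_l\mu'^*_{l'},
\]
whose objective evaluates to $2(A^*A'^*+B^*B'^*)=4A^*A'^*=\omega_S\omega_{S'}$, hitting the target exactly. The main obstacle is verifying feasibility, i.e.\ that
\[
\begin{pmatrix} 2P^*\otimes P'^* & -(J\otimes J')/2\\ -(J\otimes J')^T/2 & 2Q^*\otimes Q'^*\end{pmatrix}\succeq 0.
\]
Assuming $Q^*,Q'^*\succ 0$ (the general case is handled by a standard perturbation / Moore--Penrose argument), Schur complement reduces this to $P^*\otimes P'^*\succeq \tfrac{1}{16}\,(J(Q^*)^{-1}J^T)\otimes(J'(Q'^*)^{-1}J'^T)$. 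Individual feasibility gives $P^*\succeq\tfrac14 J(Q^*)^{-1}J^T$ and the analogous bound for the primed quantities, and tensoring these two PSD inequalities produces exactly the bound needed. Here I rely on the elementary fact that $X\succeq Y\succeq 0$ and $X'\succeq Y'\succeq 0$ imply $X\otimes X'\succeq Y\otimes Y'$, which follows from the identity $X\otimes X'-Y\otimes Y'=(X-Y)\otimes X'+Y\otimes(X'-Y')$ together with the PSD-ness of tensor products of PSD matrices. The factor $2$ in $(\widetilde\lambda,\widetilde\mu)$ is essential: without it Schur would demand a bound with $\tfrac14$ in place of $\tfrac{1}{16}$, exceeding what naive tensoring supplies by a factor of $4$. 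Once feasibility is established, weak duality applied to $S\bprod S'$ yields the desired upper bound.
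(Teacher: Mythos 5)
Your proposal is correct and follows the same high-level skeleton as the paper's argument (duality, strong duality from strict feasibility, the scaling symmetry $(\lambda,\mu)\mapsto(\alpha\lambda,\mu/\alpha)$ to balance the two halves of the dual objective at $\omega_S/2$ each, and the factor-of-two tensor dual candidate), but the feasibility verification is done by a genuinely different decomposition. The paper never takes Schur complements: it observes that the dual constraint $\left(\begin{smallmatrix}P&0\\0&Q\end{smallmatrix}\right)\succeq \left(\begin{smallmatrix}0&J/2\\J^T/2&0\end{smallmatrix}\right)$ also holds with the right-hand side negated (conjugate by $\mathrm{diag}(I,-I)$), then invokes the two-sided lemma ``$X\succeq\pm Y$, $X'\succeq\pm Y'$ $\Rightarrow X\otimes X'\succeq Y\otimes Y'$'' directly at the level of the full block matrices to obtain a $4\times 4$ block inequality, from which the product dual constraint is read off the $(1,1)/(1,4)/(4,1)/(4,4)$ corner. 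Your approach first Schur-reduces each dual constraint to $P\succeq \tfrac14 JQ^{-1}J^T$, then tensors using the one-sided lemma ``$X\succeq Y\succeq 0$, $X'\succeq Y'\succeq 0$ $\Rightarrow X\otimes X'\succeq Y\otimes Y'$'' (proved from the same algebraic identity, just specialized). Both lemmas and both routes are sound, and the factor bookkeeping ($\tfrac{1}{16}$ after Schur, needing the factor $2$ in $\widetilde\lambda,\widetilde\mu$) matches. What the paper's route buys is that it never conditions on $Q^*\succ 0$, so there is no degenerate case to patch; your route must handle singular $Q^*$, and while the deferred perturbation argument (replace $Q^*$ by $Q^*+\epsilon I$, note the perturbed block matrix stays feasible, apply Schur, then let $\epsilon\to 0^+$ using closedness of the PSD cone) does go through, it is an extra step the paper sidesteps. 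Conversely, your Schur-complement formulation makes it very transparent exactly where the factor $2$ in the dual candidate is forced, which the paper's $4\times 4$ presentation leaves more implicit.
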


This theorem can be shown to follow from the work of Mittal and Szegedy
\cite{MittalSzegedy:SDPtensor}, as we describe below. However, for completeness, we first give a
self-contained proof, which is a modification of the arguments in~\cite{MittalSzegedy:SDPtensor}.
The remainder of this section is dedicated to the proof of Theorem \ref{thm:MS}. Later sections are
independent of this proof, so the reader can safely jump to Section \ref{sec:proofrep} on first
reading.

\begin{proof}
The main tool in the proof is SDP duality (see, e.g., \cite{vandenberghe&boyd:sdp}). When applied
to a strictly feasible bipartite SDP as in Eq.~\eqref{eq:bipartitesdp}, it says that the value of
the SDP is equal to that of the following program over real variables $x_1,\ldots,x_{L_1},
y_1,\ldots,y_{L_2}$.
\begin{align}
\textbf{Minimize:\qquad}  &
   \sum_{l=1}^{L_1} x_{l}\, a_l + \sum_{l=1}^{L_2} y_{l}\, b_l \nonumber \\
\textbf{Subject to:\qquad}&
   \left(%
\begin{array}{cc}
  \sum_{l=1}^{L_1} x_l A^{l}    &  0 \\
  0   &  \sum_{l=1}^{L_2} y_l B^{l} \\
\end{array}%
\right) \succeq \left(
\begin{array}{cc}
  0    &  J/2 \\
  J^T/2   &  0 \\
\end{array}
\right) \label{eq:matrixindualsdp}
\end{align}
Moreover, even if the bipartite SDP is not strictly feasible, the value of the program above is
always an upper bound on the value of the SDP.

Using our assumption that $S$ is strictly feasible, we obtain that there are real numbers
$x_1,\ldots,x_{L_1}$, $y_1,\ldots,y_{L_2}$ satisfying that $\sum_{l=1}^{L_1} x_{l}\, a_l +
\sum_{l=1}^{L_2} y_{l}\, b_l = \omega_S$ and that Eq.~\eqref{eq:matrixindualsdp} holds. An
important observation is that we can assume without loss of generality that the two sums in the
goal function are equal, i.e., $\sum_{l=1}^{L_1} x_{l}\, a_l = \sum_{l=1}^{L_2} y_{l}\, b_l =
\omega_S/2$. To see this, notice that for any fixed $\alpha > 0$ and any feasible solution, if we
multiply all the $x$ variables by $\alpha$ and all the $y$ variables by $1/\alpha$, then
Eq.~\eqref{eq:matrixindualsdp} remains valid. This follows from the fact that for any symmetric
matrices $X,Y$ and any matrix $R$, if $X \succeq Y$ then also $R^T X R \succeq R^T Y R$; in our
case $R$ is the diagonal matrix with
$\sqrt{\alpha},\ldots,\sqrt{\alpha},1/\sqrt{\alpha},\ldots,1/\sqrt{\alpha}$ on the diagonal. We
will also need later the observation that Eq.~\eqref{eq:matrixindualsdp} remains valid if we
multiply its right hand side by $-1$; this follows by taking $R$ to be the diagonal matrix with
$1,\ldots,1,-1,\ldots,-1$ on the diagonal. Following the exact same reasoning for $S'$ leads to
real numbers $x'_1,\ldots,x'_{L'_1}$, $y'_1,\ldots,y'_{L'_2}$ satisfying that $\sum_{l=1}^{L'_1}
x'_{l}\, a'_l = \sum_{l=1}^{L'_2} y'_{l}\, b'_l = \omega_{S'}/2$ and that the inequality analogous
to Eq.~\eqref{eq:matrixindualsdp} holds (both as is, and when multiplying its right hand side by
$-1$).

Recall that our goal is to prove $\omega_{S \bprod S'} \le \omega_{S} \omega_{S'}$ (since the lower
bound is easy). By applying the discussion above to the bipartite SDP $S\bprod S'$ as given in
Eq.~\eqref{eq:bipartiteproductsdp}, we see that $\omega_{S \bprod S'}$ is at most (in fact, equal
to) the value of the following program over $L_1 L'_1 + L_2 L'_2$ real variables.
\begin{align}
\textbf{Minimize:\qquad}  &
   \sum_{l=1,l'=1}^{L_1,L'_1} x_{ll'}\, a_l a'_{l'} + \sum_{l=1,l'=1}^{L_2,L'_2} y_{ll'}\, b_l b'_{l'} \label{eq:goalinproductdualsdp} \\
\textbf{Subject to:\qquad}&
   \left(%
\begin{array}{cc}
  \sum_{l=1,l'=1}^{L_1,L'_1} x_{ll'} A^{l} \otimes A'^{l'}    &  0 \\
  0   &  \sum_{l=1,l'=1}^{L_2,L'_2} y_{ll'} B^{l} \otimes B'^{l'} \\
\end{array}%
\right) \succeq
   \left(%
\begin{array}{cc}
  0  &  J \otimes J' /2 \\
  J^T \otimes J'^T /2   &  0 \\
\end{array}%
\right) . \label{eq:matrixinproductdualsdp}
\end{align}

Consider the assignment given by $x_{ll'}=2 x_l x'_{l'}$, $y_{ll'}=2 y_l y'_{l'}$. Notice that with
this assignment, the goal function in Eq.~\eqref{eq:goalinproductdualsdp} is exactly $\omega_{S}
\omega_{S'}$ hence in the following it suffices to show that for this assignment,
Eq.~\eqref{eq:matrixinproductdualsdp} holds. To prove this, we will use the following simple but
crucial claim.

\begin{claim}
If $X,Y,X',Y'$ are symmetric matrices for which $X \succeq Y$, $X \succeq -Y$, $X' \succeq Y'$, and
$X' \succeq -Y'$, then also $X\otimes X' \succeq Y \otimes Y'$.
\end{claim}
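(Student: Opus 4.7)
The plan is to use the observation that the two-sided conditions $X \succeq \pm Y$ and $X' \succeq \pm Y'$ are exactly the statements that the four matrices $X+Y$, $X-Y$, $X'+Y'$ and $X'-Y'$ are all positive semidefinite. So the natural first step is to introduce $A := X+Y$, $B := X-Y$, $A' := X'+Y'$, $B' := X'-Y'$, giving four PSD matrices, and to substitute $X = (A+B)/2$, $Y = (A-B)/2$, and analogously for the primed variables. This way the hypotheses are fully absorbed into the variable change, and I no longer have to carry around sign conditions.

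Expanding $X \otimes X' - Y \otimes Y'$ in terms of these four PSD matrices, the ``diagonal'' tensor terms $A \otimes A'$ and $B \otimes B'$ appear with the same coefficient in both $X \otimes X'$ and $Y \otimes Y'$ and therefore cancel, while the ``off-diagonal'' terms $A \otimes B'$ and $B \otimes A'$ come out with matching positive signs. A quick computation should give
\[ X \otimes X' - Y \otimes Y' \;=\; \tfrac{1}{2}\bigl( A \otimes B' + B \otimes A' \bigr). \]
To finish I would invoke the standard fact that the tensor product of two PSD matrices is PSD — which follows immediately by writing each factor in its spectral decomposition $\sum_i \lambda_i v_i v_i^T$ with $\lambda_i \ge 0$ and using $(v \otimes w)(v \otimes w)^T = (v v^T) \otimes (w w^T)$. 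Thus $A \otimes B'$ and $B \otimes A'$ are both PSD, and the displayed right-hand side is a nonnegative combination of PSD matrices, hence PSD. This yields $X \otimes X' \succeq Y \otimes Y'$, as required.

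I do not anticipate any real obstacle here: the change of variables is what does the work, reducing the claim to the basic closure properties of PSD matrices under sums and tensor products. The only thing requiring mild care is verifying the expansion, namely that the $A\otimes A'$ and $B\otimes B'$ contributions truly cancel and that the two cross terms combine with a positive (rather than opposite) sign. This is a short, routine calculation, and once it is in place the conclusion is immediate.
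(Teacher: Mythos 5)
Your proof is correct and is essentially the paper's proof: the identity $X\otimes X' - Y\otimes Y' = \tfrac12\bigl((X-Y)\otimes(X'+Y') + (X+Y)\otimes(X'-Y')\bigr)$ you derive via the change of variables is exactly the one the paper uses, combined with closure of PSD matrices under tensor products and sums.
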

\begin{proof}
Our assumption says that $X - Y \succeq 0$, $X + Y \succeq 0$, $X' - Y'\succeq 0$, and $X' +
Y'\succeq 0$. Since both the tensor product and the sum of positive semidefinite matrices are
positive semidefinite, we obtain that the following matrix is positive semidefinite
$$ (X-Y) \otimes (X'+Y') + (X+Y) \otimes (X'-Y') = 2 (X \otimes X' - Y \otimes Y'),$$
and the claim follows.
\end{proof}

Applying this claim to the inequality in Eq.~\eqref{eq:matrixindualsdp} and to the analogous
inequality for $S'$ (recalling our observation above that both inequalities remain valid when the
right hand side is multiplied by $-1$), we obtain that

\begin{align*}
& \left(%
\begin{array}{cccc}
  \big(\sum_{l=1}^{L_1} x_l A^{l}\big)\otimes\big(\sum_{l'=1}^{L'_1} x'_{l'} A'^{l'}\big)     &  0 & 0 & 0\\
%  0 & \big(\sum_{l=1}^{L_1} x_l A^{l}\big) \otimes\big(\sum_{l'=1}^{L'_2} y'_{l'} B'^{l'}\big)    &  0 & 0 \\
   0 & \qquad * \qquad    &  0 & 0 \\
%  0 & 0   &  \big(\sum_{l=1}^{L_2} y_l B^{l}\big) \otimes\big(\sum_{l'=1}^{L'_1} x'_{l'} A'^{l'}\big) & 0  \\
  0 & 0   &  \qquad *\qquad  & 0  \\
  0 & 0   &  0 & \big(\sum_{l=1}^{L_2} y_l B^{l} \big) \otimes\big(\sum_{l'=1}^{L'_2} y'_{l'} B'^{l'}\big)  \\
\end{array}%
\right)  \\
& \succeq \left(
\begin{array}{cccc}
  0    &  0 & 0 & J \otimes J' /4 \\
  0    &  0 & J \otimes J'^T /4 & 0 \\
  0    &  J^T \otimes J' /4 & 0 & 0 \\
  J^T \otimes J'^T /4   &  0 & 0 & 0\\
\end{array}
\right)
\end{align*}
where the stars indicate expressions omitted due to lack of space. We obtain the desired
inequality, Eq.~\eqref{eq:matrixinproductdualsdp}, by multiplying both sides of the inequality by
$2$ and taking the rows and columns corresponding to the first and fourth blocks, which is an
operation that preserves semidefinite inequalities. This completes the proof of Theorem
\ref{thm:MS}.
\end{proof}

We now outline how Theorem \ref{thm:MS} can be derived from the work of Mittal and
Szegedy~\cite{MittalSzegedy:SDPtensor} on multiplicative properties of SDPs. In this work they
describe several classes of SDPs (which include our bipartite SDPs as a special case) for which
$\omega_{S \otimes S'} = \omega_{S} \omega_{S'}$ where $S \otimes S'$ denotes the \emph{tensor
product} of the two SDPs. The tensor product of SDPs is defined similarly to our bipartite product,
except that it ignores the bipartite structure of the SDP and can be applied to any SDP with
equality constraints; see~\cite{MittalSzegedy:SDPtensor} for the details. As a result, when
applying the tensor product to the two bipartite SDPs as in Eq.~\eqref{eq:bipartitesdp}, we end up with
$n_1 n'_2 + n_2 n'_1$ additional `cross variables' (so the total number of variables is $(n_1 +
n_2)(n'_1+n'_2)$) as well as $L_1 L'_2 + L_2 L'_1$ extra `cross constraints' on the cross variables
(so the total number of constraints is $(L_1 +L_2)(L'_1 + L'_2)$). Denoting the cross variables by
$\{w_{ij}\}$ and $\{z_{ij}\}$, the tensor product of two bipartite SDPs is given by
\begin{align*}
\textbf{Maximize:\qquad} &
    \textstyle \frac{1}{2} \left( \sum_{i=1,j=1,i'=1,j'=1}^{n_1,n_2,n'_1,n'_2} J_{ij} J'_{i'j'}\, \ip{u_{ii'}, v_{jj'}} +
     \sum_{i=1,j=1,i'=1,j'=1}^{n_1,n_2,n'_2,n'_1} J_{ij} J'_{j'i'}\, \ip{w_{ii'}, z_{jj'}} \right)
    \\
\textbf{Subject to:\qquad}
   & \textstyle \sum_{i,j=1,i',j'=1}^{n_1,n'_1} A^{l}_{ij} A'^{l'}_{i'j'} \ip{u_{ii'}, u_{jj'}}=a_l a_{l'} \text{~for~} l=1,\ldots,L_1,~ l'=1,\ldots,L'_1 \\
   & \textstyle \sum_{i,j=1,i',j'=1}^{n_2,n'_2} B^{l}_{ij} B'^{l'}_{i'j'} \ip{v_{ii'}, v_{jj'}}=b_l b_{l'} \text{~for~} l=1,\ldots,L_2,~ l'=1,\ldots,L'_2 \\
   & \textstyle \sum_{i,j=1,i',j'=1}^{n_1,n'_2} A^{l}_{ij} B'^{l'}_{i'j'} \ip{w_{ii'}, w_{jj'}}=a_l b_{l'} \text{~for~} l=1,\ldots,L_1,~ l'=1,\ldots,L'_2 \\
   & \textstyle \sum_{i,j=1,i',j'=1}^{n_2,n'_1} B^{l}_{ij} A'^{l'}_{i'j'} \ip{z_{ii'}, z_{jj'}}=b_l a_{l'} \text{~for~} l=1,\ldots,L_2,~ l'=1,\ldots,L'_1.
\end{align*}
The reason for the factor half in the goal function is that in standard SDP form the goal
function must be symmetric, hence the goal function of a bipartite SDP is better thought of as
$\frac{1}{2}(\sum_{i=1,j=1}^{n_1,n_2} J_{ij}\, \ip{u_i, v_j} + \sum_{i=1,j=1}^{n_1,n_2} J_{ij}\, \ip{v_j, u_i})$.
This also explains why the goal function in the tensor SDP involves cross terms.

Notice that there are no constraints involving both the cross variables and the normal variables,
and also that the goal function does not contain any inner products of cross variables with
normal variables.
This implies that the value of this SDP, $\omega_{S \otimes S'}$, is equal to
$\frac{1}{2}(\omega_{S \bprod S'}+ \omega_{S \bprod' S'})$
where $S \bprod' S'$ represents the SDP on the cross variables.
By the theorem of Mittal and Szegedy~\cite{MittalSzegedy:SDPtensor} mentioned above,
$\omega_{S \otimes S'} = \omega_{S} \omega_{S'}$ (this uses the assumption
that both $S$ and $S'$ are strictly feasible). Moreover, by taking
the tensor solution, it is easy to see that $\omega_{S \bprod' S'} \ge \omega_{S} \omega_{S'}$.
This implies that $\omega_{S \bprod S'} \le \omega_{S} \omega_{S'}$, as required.

\subsection{Proof of parallel repetition}\label{sec:proofrep}

Note that the two SDPs in Section \ref{sec:relaxation} are not bipartite. Therefore, in order to apply
Theorem \ref{thm:MS}, we define two bipartite SDPs, SDP~\ref{sdp:3} and SDP~\ref{sdp:4}.

\setcounter{program}{2}
% Label SDPS 1' and 2' instead of 3 and 4.
%\renewcommand{\theprogram}{\arabic{program}$'$}

\begin{program}
\caption{}\label{sdp:3}
\begin{tabular}{p{0.24 \textwidth}p{0.7 \textwidth}}
\textbf{Maximize:} & $\sum_{abst} \pi(s,t) V(a,b\,|\,s,t) \ip{u_a^s, v_b^t}$\\
\textbf{Subject to:} & $\forall s,~\forall a \neq b,~ \ip{u_a^s,u_b^s}=0$ and $\forall t,~\forall a \neq b,~ \ip{v_a^t,v_b^t}=0$\\
& $\forall s,~\sum_{a}\ip{u_a^s,u_a^s} =1$ and $\forall t,~\sum_{b}\ip{v_b^t,v_b^t} =1$\\
\end{tabular}
%\frame{
\end{program}

It is easy to see that SDP~\ref{sdp:3} is a relaxation of SDP~\ref{sdp:1}, and hence
for any game $G$ its value satisfies $\omega_{\sdp\text{\ref{sdp:3}}}(G) \ge \omega_{\sdp\text{\ref{sdp:1}}}(G) \ge \omega^*(G) $.
For \emph{uniform} games, we will work with SDP~\ref{sdp:4}.
It is easy to see that it is a relaxation of SDP~2 and hence for any uniform game $G$ its value
satisfies $\omega_{\sdp\text{\ref{sdp:4}}}(G) \ge \omega_{\sdp2}(G) \ge \omega^*(G)$.

\begin{program}
\caption{}\label{sdp:4}
\begin{tabular}{p{0.24 \textwidth}p{0.7 \textwidth}}
\textbf{Maximize:} & $\sum_{abst} \pi(s,t) V(a,b\,|\,s,t) \ip{u_a^s, v_b^t}$\\
\textbf{Subject to:} & $\forall s, a, b,~ \ip{u_a^s,u_b^s}=\frac{1}{k}\delta_{a,b}$
and $\forall t, a, b,~ \ip{v_a^t,v_b^t}=\frac{1}{k}\delta_{a,b}$\\
\end{tabular}
%\frame{
\end{program}

We now prove our parallel repetition theorem for the case of uniform unique games
(whose power is also uniform).

\begin{proof}[ of Theorem \ref{thm:paralleluniform}]
The lower bound is obvious. For the upper bound, recall that in Remark~\ref{remark:2}
we observed that the only constraints of SDP~2 used in the proof of Theorem~\ref{thm:lingames}
are exactly those that now appear in SDP~\ref{sdp:4}. Hence, the proof of Theorem \ref{thm:lingames}
holds word by word if we replace SDP~2 by SDP~\ref{sdp:4}, and we obtain that
our assumption $\omega^*(G) = 1-\eps$ implies that
$\omega_{\sdp\text{\ref{sdp:4}}}(G) \le 1-\frac{\eps}{4}$.

The next crucial observation is that for any two games $G$ and $G'$ we
have that SDP\ref{sdp:4}$(G)\bprod $SDP\ref{sdp:4}$(G')$ is identical to SDP\ref{sdp:4}$(G \times G')$.
To see this, notice that both are equal to
\begin{align*}
\textbf{Maximize:\qquad} & \textstyle \sum_{aa'bb'ss'tt'} \pi(s,t)\pi'(s',t') V(a,b\,|\,s,t) V'(a',b'\,|\,s',t') \ipb{u_{aa'}^{ss'}, v_{bb'}^{tt'}} \\
\textbf{Subject to:\qquad} & \forall s, s', a, a', b, b'~ \ipb{u_{aa'}^{ss'},u_{bb'}^{ss'}}=\textstyle \frac{1}{k^2}\delta_{aa',bb'}
\text{~and~} \\
& \forall t, t', a, a', b,b',~ \ipb{v_{aa'}^{tt'},v_{bb'}^{tt'}}=\textstyle \frac{1}{k^2}\delta_{aa',bb'}.
\end{align*}
In particular, this implies that SDP\ref{sdp:4}$(G^m)$ is identical to SDP\ref{sdp:4}$(G)^{\bprod m}$.

Next, observe that for any game $G$, SDP\ref{sdp:4}(G)
is strictly feasible, as can be seen by taking, say, all vectors to be orthogonal of norm $1/\sqrt{k}$.
This allows us to apply Theorem~\ref{thm:MS} and obtain that
$\omega_{\sdp\text{\ref{sdp:4}}}(G^{ m})$, which is the same as the value of SDP\ref{sdp:4}$(G)^{\bprod m}$,
is equal to $\omega_{\sdp\text{\ref{sdp:4}}}(G)^m$.
Putting all this together, we get
$$\omega^*(G^{m}) \leq \omega_{\sdp\text{\ref{sdp:4}}}(G^{ m})=\omega_{\sdp\text{\ref{sdp:4}}}(G)^m \leq \bigg(1-\frac{\eps}{4}\bigg)^m,$$
where the first inequality uses the assumption that $G^m$ is uniform.
\end{proof}

We now turn to the case of general unique games. The situation here is more involved, for two
reasons. The first is that Theorem~\ref{thm:uniqgames} uses all the constraints in SDP~\ref{sdp:1}
and we have to modify it in order to derive a similar (but significantly weaker) statement for
SDP~\ref{sdp:3}. This is done in Lemma \ref{lemma:uniquemodified} below. The second difficulty is
that the product of SDP~\ref{sdp:3} is not exactly the same as SDP~\ref{sdp:3} of the product game.
We resolve this in the proof of Theorem \ref{thm:parallelunique} below by noting that the former is a
relaxation of the latter.

\begin{lemma}\label{lemma:uniquemodified}
Let $G$ be a unique game. Suppose that $\omega_{\sdp\text{\ref{sdp:3}}}(G) = 1 -\eps$. Then $\omega^*(G) \geq 1  -
2\sqrt{2\eps} - 4\eps$.
\end{lemma}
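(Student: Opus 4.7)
The plan is to adapt the proof of Theorem~\ref{thm:uniqgames} to the weaker relaxation SDP~\ref{sdp:3}. The catch is that SDP~\ref{sdp:3} omits both the vector $z$ (and the constraints $\sum_a u_a^s = z = \sum_b v_b^t$) and the nonnegativity constraints $\ip{u_a^s, v_b^t} \ge 0$; only the same-side orthogonality constraints and the normalizations $\sum_a \|u_a^s\|^2 = \sum_b \|v_b^t\|^2 = 1$ remain. I expect the main obstacle to be recovering a per-pair bound on $p''_{\rm succ} - p'_{\rm succ}$ without access to $z$, and this is precisely where the extra $\sqrt\eps$ loss will be incurred.

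Starting from an optimal solution to SDP~\ref{sdp:3}, I would run Algorithm~\ref{alg:1} and, as in the proof of Theorem~\ref{thm:uniqgames}, fix a pair $(s,t)$ with permutation $\sigma_{st}$, set $u_i = u_i^s$, $v_i = v_{\sigma_{st}(i)}^t$, and $\tilde\eps_{s,t} = 1 - \sum_i \ip{u_i, v_i}$, so that $E_{(s,t)\sim\pi}[\tilde\eps_{s,t}] = \eps$. Corollary~\ref{cor:goodanswers} then bounds the per-pair success probability by $1 - 2(1 - p'_{\rm succ})$ with $p'_{\rm succ}$ as in \eqref{eq:pprimesucc}. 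Defining $p''_{\rm succ} = \sum_i \ip{u_i/\|u_i\|, v_i/\|v_i\|}^2 \|u_i\|\|v_i\|$ and $F = \sum_i \|u_i\|\|v_i\| \le 1$ (Cauchy-Schwarz), the convexity step of Theorem~\ref{thm:uniqgames} only uses the orthogonality and normalization constraints, so Jensen's inequality applied to $x \mapsto x^2$ still gives $p''_{\rm succ} \ge (1 - \tilde\eps_{s,t})^2 / F \ge 1 - 2\tilde\eps_{s,t}$; this part carries over verbatim.

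The new ingredient is the bound on $p''_{\rm succ} - p'_{\rm succ}$. Using $\sqrt{ab} - \min(a,b) \le |a-b|/2$, one still has $p''_{\rm succ} - p'_{\rm succ} \le \tfrac{1}{2}\sum_i \bigl|\|u_i\|^2 - \|v_i\|^2\bigr|$. Without $z$ I would instead factor $\|u_i\|^2 - \|v_i\|^2 = (\|u_i\|-\|v_i\|)(\|u_i\|+\|v_i\|)$ and apply Cauchy-Schwarz: from $F \ge \sum_i \ip{u_i,v_i} = 1 - \tilde\eps_{s,t}$ we get $\sum_i (\|u_i\|-\|v_i\|)^2 = 2 - 2F \le 2\tilde\eps_{s,t}$, while $\sum_i (\|u_i\|+\|v_i\|)^2 \le 4$, which together give $p''_{\rm succ} - p'_{\rm succ} \le \sqrt{2\tilde\eps_{s,t}}$. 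Hence the per-pair success probability is at least $1 - 4\tilde\eps_{s,t} - 2\sqrt{2\tilde\eps_{s,t}}$.

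Finally, I would average over $(s,t)$ according to $\pi$. Linearity handles the $4\tilde\eps_{s,t}$ term, and Jensen's inequality applied to the concave function $\sqrt{\cdot}$ gives $E[\sqrt{\tilde\eps_{s,t}}] \le \sqrt{E[\tilde\eps_{s,t}]} = \sqrt\eps$, yielding $\omega^*(G) \ge 1 - 4\eps - 2\sqrt{2\eps}$ as claimed. The square-root term is the quantitative cost of dropping the $z$-related constraints, and the Jensen step is what converts a per-pair $\sqrt{\tilde\eps_{s,t}}$ into the global $\sqrt\eps$ appearing in the statement.
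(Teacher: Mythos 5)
Your proposal is correct and follows essentially the same route as the paper's proof: reduce to a per-pair bound, carry over $p''_{\rm succ}\ge 1-2\tilde\eps$ unchanged, re-derive the bound on $p''_{\rm succ}-p'_{\rm succ}$ via the factorization $\|u_i\|^2-\|v_i\|^2=(\|u_i\|-\|v_i\|)(\|u_i\|+\|v_i\|)$ together with Cauchy--Schwarz, and then average over $(s,t)$ using concavity of the square root. The only (immaterial) difference is that you bound $\sum_i(\|u_i\|+\|v_i\|)^2$ by $4$ rather than by $2+2F$ as the paper does, which yields the identical final bound $\sqrt{2\tilde\eps_{s,t}}$.
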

\begin{proof}
Fix a solution $\{u_a^s\}$, $\{v_b^t\}$ to SDP~\ref{sdp:3} with value $1-\eps$ and consider the strategy of
Alice and Bob given by Algorithm~\ref{alg:1}. Our goal now is to show that this strategy has success
probability at least $1 - 2\sqrt{2\eps}-4\eps$. Using the fact that the square root of the expectation is an
upper bound on the expectation of the square root, it is easy to see that it suffices to show that for any
questions $s,t$, the success probability of Alice and Bob on these questions is at least
$$1 - 2\sqrt{2}\sqrt{1-\sum_{ab} V(a,b\,|\,s,t) \ip{u_a^s, v_b^t}} - 4(1-\sum_{ab} V(a,b\,|\,s,t) \ip{u_a^s, v_b^t}) .$$
With the notation of the proof of Theorem \ref{thm:uniqgames}, assuming $\sum_i \ip{u_i,v_i}\geq 1-\tilde \eps$ we
have to show  $p'_{\rm succ} \ge 1-\sqrt{2 \tilde \eps}-2\tilde \eps$. \footnote{Note that $\tilde \eps$ must
be non-negative because $1-\tilde \eps \leq F \leq 1$, so $\sqrt{\tilde \eps}$ is well defined.} As in the proof of
Theorem \ref{thm:uniqgames} we can define $p''_{succ}$ and obtain $p''_{succ} \geq 1-2\tilde \eps$. We used
the non-bipartite constraints of SDP~\ref{sdp:1} in the proof of Theorem \ref{thm:uniqgames} only when
bounding $p''_{\rm succ} - p'_{\rm succ}$. We now give a weaker bound on this quantity, which only uses the
constraints of SDP~\ref{sdp:3}.
\begin{align*}
p''_{\rm succ} - p'_{\rm succ}&\le   \frac{1}{2} \sum_i | \|u_i\|^2 -  \|v_i\|^2 | \\
  &=  \frac{1}{2} \sum_i | \|u_i\| -  \|v_i\| |\cdot | \|u_i\| +  \|v_i\| | \\
  &\leq  \frac{1}{2} \sqrt{\sum_i ( \|u_i\| -  \|v_i\|)^2 }\sqrt{\sum_i ( \|u_i\| +  \|v_i\|)^2 } \\
  & = \frac{1}{2} \sqrt{(2-2\sum_i \|u_i\| \|v_i\|)(2+2\sum_i \|u_i\| \|v_i\|)}\\
  &= \sqrt{1-\big(\sum_i \|u_i\| \|v_i\|\big)^2} \leq \sqrt{1-\big(\sum_i \ip{u_i,v_i}\big)^2} \leq \sqrt{2 \tilde \eps}.
\end{align*}
\end{proof}

Now we can turn to the proof of Theorem \ref{thm:parallelunique}.

\begin{proof}[ of Theorem \ref{thm:parallelunique}]
The lower bound is again obvious. For the upper bound, Lemma
\ref{lemma:uniquemodified} implies that if $\omega^*(G)=1-\eps$ then
$\omega_{\sdp\text{\ref{sdp:3}}}(G)\leq 1-\frac{\eps^2}{16}$.

Next, we consider for any two games $G$ and $G'$,
SDP\ref{sdp:3}$(G)\bprod $SDP\ref{sdp:3}$(G')$ and SDP\ref{sdp:3}$(G \times G')$.
These two SDPs are no longer identical. Instead, as we shall soon see, the former is a relaxation of
the latter. This implies that
$\omega_{\sdp\text{\ref{sdp:3}}}(G^{ m})$ is less than or equal to the value of SDP\ref{sdp:3}$(G)^{\bprod m}$.

The proof now continues as before. We observe that for any game $G$, SDP\ref{sdp:3}(G)
is strictly feasible, for the same reason as before.
This allows us to apply Theorem~\ref{thm:MS} and obtain that
the value of SDP\ref{sdp:3}$(G)^{\bprod m}$
is equal to $\omega_{\sdp\text{\ref{sdp:3}}}(G)^m$.
Putting all this together, we get
$$\omega^*(G^{m})
    \le \omega_{\sdp\text{\ref{sdp:3}}}(G^{m})
    \le \omega_{\sdp\text{\ref{sdp:3}}}(G)^m
    \le \bigg(1-\frac{\eps^2}{16}\bigg)^m.$$

It remains to show why for any two games $G$ and $G'$,
SDP\ref{sdp:3}$(G)\bprod $SDP\ref{sdp:3}$(G')$ is a relaxation of
SDP\ref{sdp:3}$(G \times G')$. The set of variables and the goal function
are identical in both SDPs. The only difference is in the constraints.
In the former SDP, the constraints are
\begin{align*}
& \forall s, s', \forall a \neq b, \forall a' \neq b',~ \ipb{u_{aa'}^{ss'},u_{bb'}^{ss'}}=0 \\
& \forall s, s', \forall a \neq b,~ \sum_{a'}\ipb{u_{aa'}^{ss'},u_{ba'}^{ss'}} = 0 \\
& \forall s, s', \forall a' \neq b',~ \sum_{a}\ipb{u_{aa'}^{ss'},u_{ab'}^{ss'}} = 0 \\
& \forall s, s',~ \sum_{a,a'}\ipb{u_{aa'}^{ss'},u_{aa'}^{ss'}} =1
\end{align*}
and similarly for the $v$ variables. In the latter SDP, the constraints are
\begin{align*}
& \forall s, s', \forall (a,a') \neq (b,b'),~ \ipb{u_{aa'}^{ss'},u_{bb'}^{ss'}}=0 \\
& \forall s, s', \sum_{a,a'}\ipb{u_{aa'}^{ss'},~ u_{aa'}^{ss'}} =1
\end{align*}
and similarly for the $v$ variables. To complete the proof, notice that
the last constraint is the same in both SDPs, and that the first constraint
in the latter SDP implies the first three constraints in the former SDP.
\end{proof}

\begin{remark}
As we saw above, for any two games $G$ and $G'$,
the value of SDP\ref{sdp:3}$(G)\bprod $SDP\ref{sdp:3}$(G')$ is greater
than or equal to that of SDP\ref{sdp:3}$(G \times G')$. In fact,
the two are equal. To see this, notice that by Theorem~\ref{thm:MS} the optimum of
the former is attained by taking the tensor of the optimum
solutions to SDP\ref{sdp:3}$(G)$ and SDP\ref{sdp:3}$(G')$,
and that this tensor solution is also a feasible solution
to SDP\ref{sdp:3}$(G \times G')$.
\end{remark}

\subsection*{Acknowledgments}

We thank Stephanie Wehner for useful discussions.

\newcommand{\etalchar}[1]{$^{#1}$}

\end{document}